\begin{document}

\title{An Optimal Execution Problem with a Geometric Ornstein--Uhlenbeck Price Process} 

\author{Takashi Kato
\footnote{Division of Mathematical Science for Social Systems, 
              Graduate School of Engineering Science, 
              Osaka University, 
              1-3, Machikaneyama-cho, Toyonaka, Osaka 560-8531, Japan, 
E-mail: \texttt{kato@sigmath.es.osaka-u.ac.jp}}
}

\date{First Version: July 8, 2011\\ 
This Version: July 29, 2014
}

\newtheorem{theorem}{Theorem}
\newtheorem{lemma}{Lemma}
\newtheorem{proposition}{Proposition}
\newtheorem{step}{Step}
\newtheorem{corollary}{Corollary}
\theoremstyle{definition}
\newtheorem{defn}{Definition}
\newtheorem{rem}{Remark}

\newcommand{\Cov}{\mathop {\rm Cov}}
\newcommand{\Var}{\mathop {\rm Var}}
\newcommand{\E}{\mathop {\rm E}}
\newcommand{\const }{\mathop {\rm const }}
\everymath {\displaystyle}

\newcommand{\ruby}[2]{
\leavevmode
\setbox0=\hbox{#1}
\setbox1=\hbox{\tiny #2}
\ifdim\wd0>\wd1 \dimen0=\wd0 \else \dimen0=\wd1 \fi
\hbox{
\kanjiskip=0pt plus 2fil
\xkanjiskip=0pt plus 2fil
\vbox{
\hbox to \dimen0{
\small \hfil#2\hfil}
\nointerlineskip
\hbox to \dimen0{\mathstrut\hfil#1\hfil}}}}

\def\qedsymbol{$\blacksquare$}
\renewcommand{\thefootnote }{\fnsymbol{footnote}}
\renewcommand{\refname }{References}

\everymath {\displaystyle}

\maketitle 

\begin{abstract}
\footnote[0]{Mathematical Subject Classification (2010) \  91G80, 93E20, 49L20}\\
\footnote[0]{JEL Classification (2010) \ G33, G11}
We study an optimal execution problem in the presence of market impact 
where the security price follows a geometric Ornstein--Uhlenbeck process, which implies the mean-reverting property, 
and show that the optimal strategy is a mixture of 
initial/terminal block liquidation and gradual intermediate liquidation. 
The mean-reverting property describes a price recovery effect 
that is strongly related to the resilience of market impact, 
as described in several papers that have studied optimal execution in a limit order book (LOB) model. 
It is interesting that despite the fact that the model in this paper is different from the LOB model, 
the form of our optimal strategy is quite similar to those obtained for an LOB model. 
Moreover, we discuss what properties cause gradual liquidation as an optimal strategy 
by studying various cases 
and find out that not only ``convexity of market impact function'' but also ``price recovery effect'' 
(or, in other words, transience of market impact) are essential to make a trader 
execute the security gradually to mitigate the effect of market impact.  
\\\\
{\bf Keywords} : Optimal execution, Market impact, Liquidity problems, Ornstein--Uhlenbeck process, 
Gradual liquidation
\end{abstract}

\section{Introduction}\label{intro} 

The basic framework of the optimal execution (liquidation) problem was established in 
Bertsimas and Lo~\cite{Bertsimas-Lo}, 
and the theory of optimal execution has been developed by Almgren and Chriss~\cite{Almgren-Chriss}, 
He and Mamaysky~\cite{He-Mamaysky}, Huberman and Stanzl~\cite{Huberman-Stanzl2}, 
Subramanian and Jarrow~\cite{Subramanian-Jarrow}, and many others 
(see also Gatheral and Schied~\cite {Gatheral-Schied}, in which we survey dynamic models of execution problems). 
Optimal execution problems arise naturally in trading operations, 
such as when a trader tries to execute a large trade for a security. 
In these cases, the trader should take care about liquidity problems 
and, in particular, should not neglect market impact (MI), which plays an important role in execution cost. 
Here, MI means the effect that a trader's investment behavior has on security prices. 

To study the MI of a trader's execution policy, 
we consider a case where a trader sells held shares of the security 
after predicting a decrease in the price of the security. 
In a frictionless market, a risk-neutral trader should sell all shares as soon as possible, 
and so the optimal strategy is block liquidation at the initial time. 
However, in real markets, traders tend to liquidate a position over time. 
The factors that lead to gradual liquidation are therefore important. 

Convexity of MI is one factor that would dissuade traders from block liquidation. 
As shown in examples in Kato~\cite{Kato}, 
a risk neutral trader in a market of the Black--Scholes type 
will gradually liquidate if MI follows a quadratic function; in contrast, block liquidation is optimal when MI is linear. 
However, many traders in the real market execute their sales over time despite recognizing that 
MI is not always convex. 

Risk aversion also will affect a trader's execution policy, 
providing an incentive to trade over a longer period. 
Schied and Sch\"oneborn~\cite{Schied-Schoeneborn} consider the optimal strategies when the utility function rewards risk aversion
and clarify the relation between the degree of risk aversion and the form of the optimal strategy. 
Additionally, He and Mamaysky~\cite{He-Mamaysky} treat execution problems in a Black--Scholes-type model 
with a linear MI function and numerically derive some no-trading regions of optimal strategies; in such regions the optimal strategies are not all block liquidation. 
Howoever, from Lions and Lasry~\cite{Lions-Lasry} and Kato~\cite{Kato}, 
we know that the optimal strategy under a linear MI function is not 
gradual in several cases: 
it is block liquidation at the initial time even when the trader is risk averse 
so long as the risk-adjusted drift coefficient of the security price is nonpositive. 

Another important motive for liquidation over time is that, due to the effect of MI, 
a security price may recover after a downward movement in price. 
Such a phenomenon is called a ``price recovery effect,'' 
and such effects implicitly describe transient MI 
(see Gatheral and Schied~\cite{Gatheral-Schied} for details). 
In this paper, to consider a price recovery effect, we focus on the case where the process of a security price has 
the mean-reverting property, and, in particular, we focus on when it follows a geometric Ornstein--Uhlenbeck (OU) process. 
We explicitly solve the optimization problem with static execution strategy 
for a linear MI and show that 
the optimal strategy is a mixture of initial/terminal block liquidation 
and gradual intermediate liquidation. 
Our study in this paper is also shown to be a representative case 
of when gradual liquidation is necessary in the framework of Kato~\cite {Kato} 
even with a linear MI and risk-neutral trader. 

Our results are related to those of studies of execution problems in limit order book (LOB) models. 
In a LOB model, sales by a trader decrease buy limit orders, thereby
temporarily expanding the bid--ask spread, and new buy limit orders appear over time, causing 
the bid--ask spread to shrink as time passes. 
The problem of minimizing expected execution cost in a block-shaped LOB model 
with exponential resilience of MI is studied in Obizhaeva and Wang~\cite{Obizhaeva-Wang}. 
A mathematical generalization of the results of Obizhaeva and Wang~\cite{Obizhaeva-Wang} is given in 
Alfonsi et al.~\cite{Alfonsi-Fruth-Schied} and Predoiu et al.~\cite {Predoiu-et-al}. 
Additionally, Makimoto and Sugihara~\cite{Makimoto-Sugihara} treat a model of optimal execution 
under stochastic liquidity. 
It is interesting that despite the fact that the model in this paper is different from the LOB model, 
the form of optimal execution strategies in our model become quite similar 
to the optimal strategies found in papers focusing on an LOB model. 
Indeed, when the security price process has no volatility, 
the form of our optimal strategy coincides with those in 
Alfonsi et al.~\cite{Alfonsi-Fruth-Schied} and Obizhaeva and Wang~\cite{Obizhaeva-Wang}: 
the rate of intermediate liquidation is constant. 
When the volatility is larger than zero, 
the rate decreases over time, as is found in Makimoto and Sugihara~\cite{Makimoto-Sugihara}. 

This paper is organized as follows. 
In Section \ref {sec_model}, we introduce our model settings. 
In Section \ref{sec_gOU}, we explicitly solve our optimization problem and 
give the forms of optimal strategies. 
We additionally discuss essential properties of MI that induce gradual liquidation. 
Section \ref{sec_conclusion} summarizes our study. 
Section \ref {sec_appendix} is an appendix, 
where the 
derivation of our model from discrete-time models (Section \ref{sec_derivation}) and 
the proofs of our results (Section \ref{sec_proofs}) are given.

\section{The Model}\label{sec_model}

Our model is based on that of Kato~\cite{Kato}. 
Let $(\Omega ,\mathcal {F}, \allowbreak (\mathcal {F}_t)_{t\geq 0}, P)$ 
be a filtered space satisfying the usual conditions (i.e., 
$(\mathcal {F}_t)_t$ is right continuous and $\mathcal {F}_0$ contains 
all $P$-null sets), and let 
$(B_t)_{t\geq 0}$ be a standard one-dimensional $(\mathcal {F}_t)_t$-Brownian motion. 
We consider a simple market model in which there are only two financial assets: cash and a security. 
We assume that the risk-free rate of return is zero, so the price of cash is always $1$. 
We study the execution problem of a single trader who has $\Phi _0\geq 0$ shares of the security. 

First, we prepare the class of admissible execution strategies. 
Let $T > 0$ be a time horizon. 
We assume without loss of generality that $T = 1$. 
We denote by 
$\mathcal {A}_T(\Phi _0)$ the set of 
Borel-measureable functions 
$(\zeta _r)_{0\leq r\leq T}$\vspace{2mm} such that 
\begin{itemize}
 \item [(a.)] $\zeta _r\geq 0$ for each $r\in [0, T]$,\ and 
 \item [(b.)] $\int ^T_0\zeta _rdr\leq \Phi _0$. 
\end{itemize}
Here, $\zeta _r$ is regarded as the execution speed at time $r$: 
hence, at time $r$, the instantaneous sales volume is $\zeta _rdr$. 
Condition (a.) means that the trader executes only sell orders. 
Moreover, by (b.), the trader cannot sell more than $\varphi$ shares, 
and so short selling is prohibited in our model 
(see also Section 2 in \cite {Kato}). 

Now, we define a value function that corresponds to the trader's optimization problem. 
In this paper, we treat mainly optimization of the expected cost; 
that is, the trader tries to maximize expected proceeds. 
For $t\in [0,1], (w,\varphi ,s)\in D:= \mathbb {R}\times [0, \Phi _0]\times [0, \infty )$, 
we define 
\begin{eqnarray}\label{value_RN}
V_t(w,\varphi ,s ; u_{\mathrm {RN}}) = 
\sup _{(\zeta _r)_{r}\in \mathcal {A}_t(\varphi)}
\E [W_t], 
\end{eqnarray}
subject to 
\begin{eqnarray}\label{fluc_W}
dW_r &=& \zeta_rS_rdr, \\\label{fluc_X}
dX_r &=& \sigma dB_r+ \beta (F - X_r)dr - \alpha \zeta_rdr, \\\label{exp_X}
S_r &=& \exp(X_r)
\end{eqnarray}
and 
\begin{eqnarray}\label{init_cond}
(W_0, X_0) = (w, \log s). 
\end{eqnarray}
(When $s = 0$, we have $S_r\equiv 0$ and $X_r \equiv  -\infty$.) 
Note that the set $\mathcal {A}_t(\varphi )$ of admissible strategies 
is defined in the same manner as $\mathcal {A}_T(\Phi _0)$ is defined. 
Here, $W_r$, $S_r$, and $X_r$ denote, respectively, 
the trader's cash holdings, 
the security's price at time $r$, and the security's log-price at time $r$. 
A risk-neutral trader is assumed to have utility function $u_{\mathrm {RN}}$. 
The parameter $\alpha \geq 0$ characterizes a linear permanent MI: 
when the trader sells $\zeta _rdr$ at time $r$, the log-price decreases by $\alpha \zeta _rdr$. 
When $\alpha = 0$, there is no MI; 
$(X_r)_r$ follows the OU process with mean-reversion speed $\beta > 0$ and 
volatility $\sigma \geq 0$. 
We remind the reader that the transient MI is described by the mean-reverting property of $(X_r)_r$. 
Here, we can write down the explicit form of the solution of (\ref {fluc_X}):  
\begin{eqnarray}\label{fluc_X_general}
X_r = e^{-\beta r}\log s + (1-e^{-\beta r})F - \alpha e^{-\beta r}\int ^r_0e^{\beta v}\zeta _vdv + 
\sigma e^{-\beta r}\int ^r_0e^{\beta v}dB_v. 
\end{eqnarray}

\begin{rem} 
{\rm As in \cite {Kato}, 
we restrict neither the security price process nor the MI function 
to specific forms such as the OU process and the linear MI, respectively. 
In general, the process $(X_r)_r$ is given as the solution of the following stochastic differential equation: 
\begin{eqnarray}\label{fulc_X_general}
dX_r &=& \sigma (X_r)dB_r+ b(X_r)dr - g(\zeta  _r)dr, 
\end{eqnarray}
where $\sigma , b : \mathbb {R} \longrightarrow \mathbb {R}$ are 
Lipschitz-continuous functions and 
$g : [0, \infty )\longrightarrow [0, \infty )$ is a convex function. 
Moreover, the set of admissible strategies is generalized to the set of 
adaptive strategies (see Section \ref {sec_derivation}). 
Our model is just an example of a model from \cite{Kato}, 
but there are some technical gaps: 
in \cite{Kato}, $\sigma$ and $b$ are assumed to be bounded, 
whereas, 
$b(x) = \beta (F - x)$ in (\ref {fluc_X}) is not bounded. 
However, results similar to those in \cite{Kato} will still be obtained; 
these are derived in Section \ref {sec_derivation}. }
\end{rem}

\begin{rem}
{\rm When considering optimal execution problems for a risk-neutral trader, 
that is, when the trader's purpose is to maximize the expected proceeds of execution 
(equivalently, to minimize the expected execution cost), 
adaptive (stochastic) optimal strategies often become static (deterministic) strategies 
(see Alfonsi et al.~\cite{Alfonsi-Fruth-Schied}, 
Kato~\cite{Kato}, Kuno and Ohnishi \cite{Kuno-Onishi}, 
and Schied and Zhang~\cite{Schied-Zhang}, for instance). 
Moreover, there are several papers that study the optimal execution problem 
with static strategies 
(Almgren and Chriss~\cite{Almgren-Chriss}, Bertsimas and Lo~\cite{Bertsimas-Lo}, 
Konishi and Makimoto~\cite{Konishi-Makimoto}, and 
Makimoto and Sugihara \cite{Makimoto-Sugihara}, etc.). 
In these cases, the trader predetermine a strategy 
before starting execution at $t = 0$. 
Such a static optimal strategy is often called an 
Implementation Shortfall (IS) strategy. 
Following the above circumstances, we treat mainly optimal execution problems with static strategies in this paper, 
except for in Section \ref {sec_derivation}. }
\end{rem}

\section{Main Results}\label{sec_gOU}

In this section, we give explicit forms for the value function and the optimal strategies 
when the security holdings $\varphi $ is small enough or large enough. 
Further, we discuss which properties will cause gradual execution to be optimal under MI. 
For brevity, we set $y = \sigma ^2/(4\beta )$ and $z = \log s - F$. 
We assume $z > 2y (\geq 0)$ so that the security price falls to the fundamental value $e^F$ as time passes.

\subsection{No MI case}

First, we introduce the forms of optimal strategies 
when there is no MI (i.e., when $\alpha = 0$). 

\begin{theorem}\label{th_no_MI}
If $\alpha = 0$, then $V_t(w, \varphi , s ; u_\mathrm {RN}) = w + \varphi s$. 
\end{theorem}

In this case, the trader's (nearly) optimal strategy is given by 
\begin{eqnarray}\label{def_hat_I}
\hat{\zeta }^{0,\delta }_r = \frac{\varphi }{\delta }1_{[0, \delta]}(r) 
\end{eqnarray}
with $\delta \rightarrow 0$. 
More precisely, if we denote by $(\hat{W}^{\delta }_r)_r$ the 
corresponding process of cash holdings, it follows that 
\begin{eqnarray}\label{conv_w}
\E [\hat{W}^{\delta }_t] \longrightarrow 
V_t(w, \varphi , s ; u_\mathrm {RN}), \ \ \delta \downarrow 0. 
\end{eqnarray}
We call such a strategy an ``almost block liquidation'' at the initial time (see also Remark 5.3 in \cite {Kato}). 

\begin{rem}
{\rm We can solve the optimization problem even when $z \leq 2y$. 
Indeed, if $(0 < ) 2e^{-\beta t}y < z \leq 2y$, then
an optimal strategy is given by 
$\hat{\zeta }^{t^*,\delta }_r = \frac{\varphi }{\delta }1_{[t^*, t^* + \delta ]}(r)$, 
which is the almost block liquidation at time $t^*$, 
where $t^*=\log (2y/z)/\beta$. 
Moreover, if $z \leq 2e^{-\beta t}y$, 
then the optimal strategy is the terminal almost block liquidation 
$\hat{\zeta }^{t,\delta }_r = \frac{\varphi }{\delta }1_{[t - \delta , t]}(r)$. 
Therefore, in each case when the market is fully liquid, 
the optimal strategy is block liquidation. }
\end{rem}

\subsection{The case of small $\varphi$}

For the remainder of the paper, we assume that the MI function is non-trivial and linear; that is, we assume that $\alpha > 0$. 
In this subsection, we study cases where $\varphi$ is small. 
In the previous subsection, we see that 
a trader in a fully liquid market (i.e., $\alpha = 0$) 
should sell all securities at the initial time. 
In fact, when $\varphi$ is small enough, the trader's optimal policy is almost the same as 
initial block liquidation. 

\begin{theorem} \ \label{th_phi_small}
If $\varphi \leq  (z - 2y)/\alpha $, then 
\begin{eqnarray}\label{value_fnc_initial}
V_t(w, \varphi , s ; u_\mathrm {RN}) = w + \frac{1 - e^{-\alpha \varphi}}{\alpha}s. 
\end{eqnarray}
\end{theorem}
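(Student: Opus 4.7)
Set $f(\varphi) := (1 - e^{-\alpha\varphi})/\alpha$, $\Phi(w,\varphi,s) := w + f(\varphi) s$ and $z_r := X_r - F$ (so $z = z_0$); the claim to prove is $V_t(w,\varphi,s;u_{\mathrm{RN}}) = \Phi(w,\varphi,s)$, which I do by matching inequalities.

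\emph{Lower bound $V_t \geq \Phi$.} This holds without the hypothesis on $\varphi$. Since $h(\infty) = \alpha < \infty$, Theorem~\ref{conti} gives $\lim_{t' \downarrow 0} V_{t'}(w,\varphi,s;u_{\mathrm{RN}}) = Ju_{\mathrm{RN}}(w,\varphi,s) = \sup_{\psi \in [0,\varphi]}\{w + f(\psi) s\}$; as $f$ is increasing the sup is attained at $\psi = \varphi$, so $Ju_{\mathrm{RN}} = \Phi$. Because $u_{\mathrm{RN}}(w,\varphi,s) = w$ does not depend on $s$, extending any strategy in $\mathcal{A}_{t'}(\varphi)$ by zeros on $(t',t]$ produces a strategy in $\mathcal{A}_t(\varphi)$ with the same expected cash, so $V_t \geq V_{t'}$ for every $t' \leq t$, and hence $V_t \geq \Phi$.

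\emph{Upper bound $V_t \leq \Phi$, under $\alpha\varphi \leq z - 2y$.} Fix $(\zeta_r)_r \in \mathcal{A}_t(\varphi)$ and apply It\^o's formula to $\Phi(W_r,\varphi_r,S_r)$ along the controlled SDE of Section~\ref{sec_model}. Using $\partial_w\Phi = 1$, $\partial_\varphi\Phi = e^{-\alpha\varphi}s$, $\partial_s\Phi = f(\varphi)$, $\partial_{ss}\Phi = 0$, the $\zeta_r$-dependent drift contribution is $\zeta_r S_r\{1 - e^{-\alpha\varphi_r} - \alpha f(\varphi_r)\}$, which vanishes identically by the identity $\alpha f(\varphi) + e^{-\alpha\varphi} \equiv 1$. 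Thus
\[
d\Phi(W_r,\varphi_r,S_r) = f(\varphi_r)\hat b(S_r)\,dr + f(\varphi_r)\hat\sigma(S_r)\,dB_r.
\]
The moment bounds of Lemmas~\ref{cond_of_X2}--\ref{cond_of_X1} together with Assumption $[B]$ make the stochastic integral a true martingale, and $\hat b(s) = \beta s(2y - z_r)$ yields
\[
\E[W_t] = w + f(\varphi) s - \E[f(\varphi_t) S_t] + \beta\int_0^t \E[f(\varphi_r) S_r(2y - z_r)]\,dr.
\]
Because $\E[f(\varphi_t) S_t] \geq 0$, the remaining task is to show the time integral is non-positive.

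\emph{Main obstacle.} The integrand has no fixed sign path by path: Brownian noise can push $z_r$ below $2y$ even when $z \geq 2y + \alpha\varphi$. My plan is to introduce the auxiliary process $N_r := S_r e^{\alpha(\varphi - \varphi_r)}$, whose dynamics $d\log N_r = \sigma\,dB_r - \beta z_r\,dr$ and $dN_r = N_r[\beta(2y - z_r)\,dr + \sigma\,dB_r]$ mirror the structure of the integrand, and to combine an integration by parts with the Girsanov change of measure $dP^*/dP := \exp(\sigma B_t - \sigma^2 t/2)$ to obtain
\[
\E[W_t] - w = s\,\mathbb{E}^{P^*}\!\Big[\int_0^t e^{-\beta\int_0^r(z_v - 2y)\,dv}\,dY_r\Big], \qquad Y_r := f(\varphi - \varphi_r),
\]
where under $P^*$ the process $z_r$ is an OU reverting to $4y$. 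The hypothesis $\alpha\varphi + 2y \leq z$ is exactly the statement that an instantaneous liquidation of the whole position $\varphi$ leaves the log-price above the threshold $2y$ at which $\hat b$ changes sign; combined with $Y_r \leq f(\varphi)$ and the Gaussian moment formulas for $\int_0^r(z_v - 2y)\,dv$ under $P^*$, this should close the estimate. Rigorously controlling this exponential weight is the hardest step; if a direct pathwise bound is too delicate, my fallback is a verification argument using that $\Phi$ is invariant under the initial-impulse operation $\sup_{\psi \in [0,\varphi]}\{w + f(\psi) s + f(\varphi - \psi) s e^{-\alpha\psi}\} = w + f(\varphi) s$ and is a classical super-solution of the continuous HJB in the region $\{z \geq 2y\}$ containing the initial datum, concluding by a comparison argument backed by the semigroup of Theorem~\ref{semi}.
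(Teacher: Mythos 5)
Your lower bound is correct (though heavier than needed: the paper simply evaluates the almost-block strategy $\hat{\zeta }^{I,\delta }$ and lets $\delta \rightarrow 0$). The genuine gap is in the upper bound, where your argument stops precisely at the point where the hypothesis $\alpha \varphi \leq z - 2y$ must be used, and neither of the two devices you propose closes it. The It\^o computation correctly reduces matters to showing $\beta \int ^t_0\E [f(\varphi _r)S_r(2y - z_r)]dr \leq \E [f(\varphi _t)S_t]$ (or your stronger claim that the integral is nonpositive), but the integrand has no sign pathwise, and the Girsanov reformulation does not remove the difficulty: under $P^*$ the weight $\exp (-\beta \int ^r_0(z_v - 2y)dv)$ still exceeds $1$ on an event of positive probability, it is correlated with $dY_r$ for adapted $(\zeta _r)_r$, and the ``Gaussian moment formulas'' you invoke are only available for deterministic controls --- at which point you are forced into exactly the explicit computation you are trying to avoid. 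The fallback is no better: $\Phi $ is a supersolution of the HJB equation only on the region $\{ \log s - F \geq 2y\} $, and the controlled price exits that region with positive probability no matter how large $z$ is, so a comparison argument confined to that region does not close either.

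The paper's proof sidesteps the sign problem by taking expectations \emph{before} estimating. Since $u_{\mathrm {RN}}$ is linear and the log price is an explicit Gaussian functional of the control, one may restrict to deterministic strategies and compute $\E [S_r] = e^{F+y}\exp ( e^{-\beta r}z - e^{-2\beta r}y - \alpha e^{-\beta r}\int ^r_0e^{\beta v}\zeta _vdv )$ in closed form; the Brownian noise is absorbed into the variance correction $-e^{-2\beta r}y$ and the problem becomes the deterministic maximization $(\ref {def_f})$. The entire content of the hypothesis is then the elementary pointwise inequality $\{ z - y - \alpha \eta _r\} - \{ e^{-\beta r}z - e^{-2\beta r}y - \alpha e^{-\beta r}\eta _r\} = (1 - e^{-\beta r})(z - (1 + e^{-\beta r})y - \alpha \eta _r)\geq 0$, which holds precisely because $\alpha \eta _r\leq \alpha \varphi \leq z - 2y$; integrating $e^{z - y - \alpha \eta _r}d\eta _r$ then gives the bound $(1 - e^{-\alpha \varphi })e^{z-y}/\alpha $. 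Note that the constant $2y$ in the hypothesis enters through the comparison of $-e^{-2\beta r}y$ with $-y$, i.e.\ through the Gaussian variance term, which is exactly the piece a pathwise or supersolution argument never sees. To repair your proof you must supply this expectation-level estimate (or an equivalent one); that estimate is the actual mathematical content of the theorem.
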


The form of (\ref {value_fnc_initial}) is the same as that in Theorem 8 of Kato~\cite{Kato}. 
The trader's (nearly) optimal strategy is also given by (\ref {def_hat_I}). 
This result implies that if the trader has a small number of shares of the security, 
then MI provides no incentive to take liquidate gradually.

\subsection{The case of large $\varphi$}

When $\varphi $ is not small, the assertion of Theorem \ref{th_phi_small} fails. 
The trader's selling accelerates the speed of decrease of the security price, 
and so a quick liquidation may be non-optimal because of  the effect of MI. 
Moreover, if the trader's execution makes the price drop below $e^F$, 
then the price will recover to $e^F$ by delaying the sale. 
This gives the trader an incentive to liquidate gradually. 
Our purpose in this subsection is to derive an explicit (nearly) optimal execution strategy. 

\subsubsection{No volatility case}\label{sec_const}

Firstly, we study the special case where $\sigma  = 0$. 
This setting gives a price model that is unrealistic and not meaningful in an actual market 
because the random fluctuation of the security price is ignored. 
However, doing so yields an interesting similarity to 
the results of other studies of optimal execution. 

Let $P(x) = e^{-\alpha x}(1 - \alpha x)$. 
Since the function $P$ is strictly decreasing on $(-\infty , 2 / \alpha ]$, 
we can define its inverse function as $P^{-1} : [-e^{-2}, \infty ) \longrightarrow (-\infty , 2 / \alpha]$. 

We assume that the security holdings $\varphi $ is larger than $z / \alpha$. 
We define the function $C(p) = C_{t, \varphi }(p)$, $p\in \mathbb {R}$ as 
\begin{eqnarray*}
C(p) = 
\exp (\alpha (t\beta + 1)p - \alpha \varphi - t\beta z) + \alpha p - z - 1.  
\end{eqnarray*}
Since $C(p)$ is strictly increasing and $C(z/\alpha ) < 0 < C\left( (\varphi -z/\alpha ) / (1 + \beta t)\right)$, 
the equation $C(p) = 0$ has a unique solution $p^* = p^*(t, \varphi ) \in (\varphi - z/\alpha, (\varphi -z/\alpha ) / (1 + \beta t))$. 
Then, we can show that the following theorem is true.

\begin{theorem} \ \label{eg_OU_const}
If $\varphi > z/\alpha  ( > 0 )$, 
then it holds that 
\begin{eqnarray}\label{th_value_fnc_const}
V_t(w, \varphi ,s ; u_{\mathrm {RN}}) = w + 
\frac{1 - e^{-\alpha (p^* + q^*)}}{\alpha }s + tse^{-\alpha p^*}\zeta^*, 
\end{eqnarray}
where $\zeta ^* = \zeta ^*(t, \varphi )$ and $q^* = q^*(t, \varphi )$ are given by 
\begin{eqnarray*}
\zeta ^* = \beta (p^* - z/\alpha), \ \ 
q^* = \varphi - p^* - t\zeta^*. 
\end{eqnarray*}
\end{theorem}

Note that one can easily check that $p^*, \zeta ^*, q^* > 0$. 
We can then construct a nearly optimal strategy as follows (with $\delta \downarrow 0$): 
\begin{eqnarray}\label{nearly_optimal_strategy0}
\hat{\zeta }^\delta _r = \frac{p^*}{\delta }1_{[0,\delta ]}(r) + 
t\zeta ^* + \frac{q^*}{\delta }1_{[t - \delta , t]}(r). 
\end{eqnarray}
That is, 
(\ref {conv_w}) holds for 
$(\hat{W}^\delta _r)_r$ given by (\ref {fluc_W}) with $(\hat {\zeta }^\delta _r)_r$. 

The strategy $(\hat{\zeta }^\delta _r)_r$ consists of three terms. 
The first term in the right-hand side of (\ref {nearly_optimal_strategy0}) 
corresponds to initial (almost) block liquidation. 
The trader should sell $p^*$ shares of the security at the initial time by selling infinitesimal pieces at infinitesimal intervals to avoid a decrease in the execution proceeds. 
The second term corresponds to gradual liquidation. 
The trader executes the selling gradually until the time horizon at constant speed $\zeta^*$. 
Finally, the trader completes liquidation by selling the remaining shares by 
terminal (almost) block liquidation, which corresponds to the third term. 
So, the nearly optimal strategy is a mixture of both block liquidation and gradual liquidation. 
We point out, in particular, that gradual liquidation is necessary for optimality in this case. 
Figure \ref {fig_OU_const} shows an image representing the form of the optimal strategy. 
In fact, the security price on the interval $(\delta , 1 - \delta )$ is equal to 
a constant $se^{-\alpha p^*}$.

\begin{figure}[t]
\begin{center}
\includegraphics[height = 4.69cm,width=7.296cm]{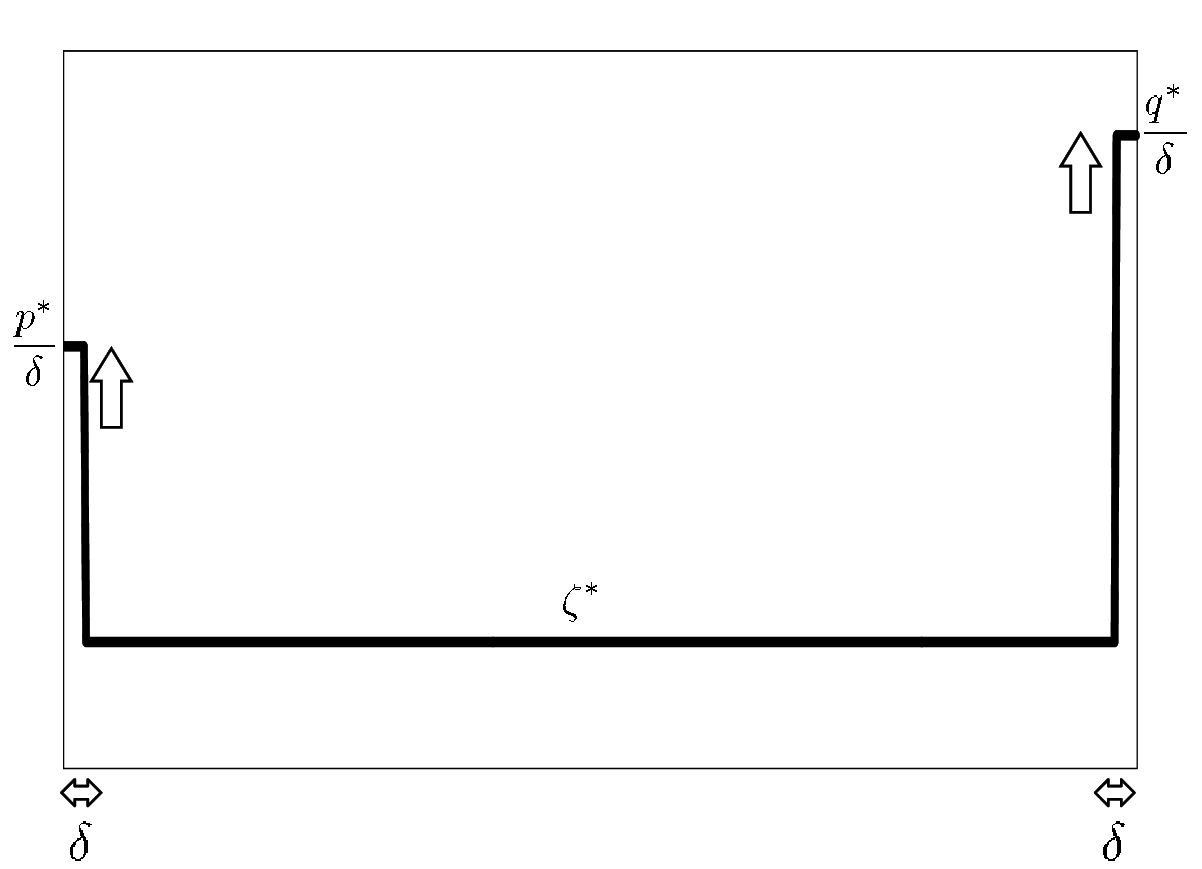}
\includegraphics[height = 4.5cm,width=7cm]{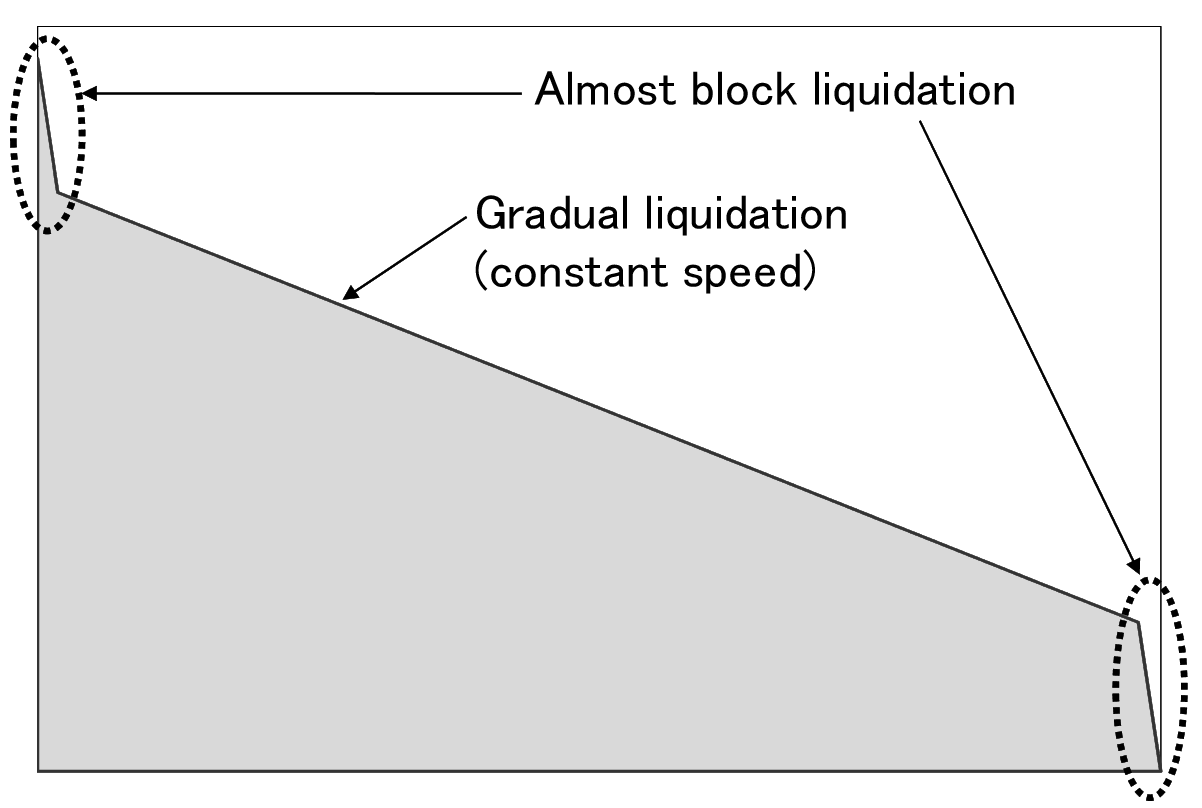}
\caption{Shape of a nearly optimal strategy $(\zeta ^\delta _r)_r$ (left) and 
the corresponding amount of security holdings (right), both when $\sigma = 0$. 
The horizontal axes are time ($r$). }
\label{fig_OU_const}
\end{center}
\end{figure}

This result is quite similar to that of Alfonsi et al.~\cite{Alfonsi-Fruth-Schied} and Obizhaeva and Wang~\cite{Obizhaeva-Wang}, 
despite the fact that there are some mathematical differences between their models and ours. 
We consider the geometric OU process for a security price.
In contrast, 
Alfonsi et al.~\cite{Alfonsi-Fruth-Schied} and Obizhaeva and Wang~\cite{Obizhaeva-Wang} assume that 
the process of a security price follows arithmetic Brownian motion (or a martingale) and 
that there is exponential (or some more general shape of) resilience for MI in an LOB model. 
The relation between the mean-reverting property of the OU process and the resilience of MI 
causes this interesting similarity in results. 

\subsubsection{The general case}\label{subsec_IS}

In this subsection, we consider the general case, where $\sigma \geq 0$. 
We assume the following condition: 
\begin{eqnarray}\label{cond_phi}
\varphi > \frac{\max\{z, 1 + \beta\}}{\alpha}. 
\end{eqnarray}
This condition means that the amount of the trader's security holdings is larger than that in the case of Section \ref {sec_const}. 
We define the function $H(\lambda ) = H_{t, \varphi }(\lambda )$ on $[0, \infty )$ by 
\begin{eqnarray*}
H(\lambda ) = 
\alpha \exp \left( \alpha \beta \int ^t_0P^{-1}\left( \exp(-e^{-2\beta r}y)\lambda / \alpha \right) dr
- \alpha \varphi + z - y \right) - \lambda. 
\end{eqnarray*}
Note that $H$ is nonincreasing on $[0, \infty)$. 
Moreover, (\ref {cond_phi}) implies that
\begin{eqnarray*}
H\left(\alpha e^{-y}\right) \leq  0 < H(0), 
\end{eqnarray*}
and so the equation $H(\lambda) = 0$ has the unique solution 
$\lambda ^* = \lambda ^*(t, \varphi) \in \left(0, \alpha e^{-y} \right]$. 
The next theorem is the main result in this section. 

\begin{theorem} \ \label{eg_OU}Let $t \in (0, 1], (w, \varphi , s)\in D$ and assume that $(\ref {cond_phi})$ is true. 
Then,
\begin{eqnarray}\label{th_value_fnc}\nonumber 
V_t(w, \varphi ,s ; u_{\mathrm {RN}}) &=& w + 
\frac{s}{\alpha }\left( 1 - \exp\left(-\alpha \varphi + 
\alpha \beta \int ^t_0\xi ^*_rdr\right)\right) \\&& + 
\beta \int ^t_0\xi ^*_r\exp \left (F-\alpha \xi ^*_r + (1 + e^{-2\beta r})y\right)dr, 
\end{eqnarray}
where 
$\xi ^*_r = P^{-1}(\exp(-e^{-2\beta r}y)\lambda^* / \alpha)$. 
\end{theorem}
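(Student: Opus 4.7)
My plan is to establish (\ref{th_value_fnc}) by a two-sided matching argument: construct an explicit near-optimal strategy of ``initial block plus gradual singular liquidation'' type whose expected cash equals the right hand side, and obtain a matching upper bound by a HJB verification.

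First I would reduce the stochastic problem to a deterministic control problem. Because $u_{\mathrm{RN}}(w,\varphi,s)=w$, the value involves only $\E[W_t]$, and for any deterministic $\zeta$ the log-price $X_r$ is Gaussian with mean $\tilde X_r$ solving $\dot{\tilde X}=\beta(F-\tilde X)-\alpha\zeta$, $\tilde X_0=\log s$, and variance $2y(1-e^{-2\beta r})$, whence $\E[\zeta_r S_r]=\zeta_r\exp(\tilde X_r+y-ye^{-2\beta r})$. This delivers the deterministic problem
\[
  \sup_{\zeta\ge 0,\ \int_0^t \zeta_r dr\le\varphi}\ \int_0^t \zeta_r\, e^{\tilde X_r+y-ye^{-2\beta r}}\,dr,
\]
whose value will equal $V_t(w,\varphi,s;u_{\mathrm{RN}})-w$ once the upper bound for adapted $\zeta$ is produced via HJB.

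Next I would apply Pontryagin with a constant Lagrange multiplier $\lambda\ge 0$ on the volume constraint. The Hamiltonian
\[
  H \;=\; \zeta\bigl(e^{\tilde X+y-ye^{-2\beta r}}-\lambda-\alpha p\bigr)+p\beta(F-\tilde X)
\]
is linear in $\zeta$, producing a singular-control optimiser: $\zeta=0$ where the bracket is negative, a singular/gradual arc on the zero-bracket surface, and instantaneous block liquidations (limits of $\zeta^{I,\delta}$-type concentrations) where the bracket is positive. Differentiating the zero-bracket condition and eliminating $\zeta$ via the costate equation $\dot p=\beta p-\zeta e^{\tilde X+y-ye^{-2\beta r}}$ yields the algebraic identity
\[
  \lambda \;=\; e^{\tilde X_r+y-ye^{-2\beta r}}\bigl(\tilde X_r-F+1-2ye^{-2\beta r}\bigr),
\]
which under the reparametrisation $\tilde X_r=F+2ye^{-2\beta r}-\alpha\xi_r$ becomes exactly $P(\xi_r)=e^{-ye^{-2\beta r}}\lambda/\alpha$, recovering the definition of $\xi^*_r$. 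Transversality $p(t)=0$ together with the sell-out balance $\psi^0+\int_0^t \zeta^*_r\,dr=\varphi$ pins down $\lambda^*$ as the unique root of $H_{t,\varphi}(\lambda^*)=0$, with existence and uniqueness in $(0,\alpha e^{-y})$ guaranteed by assumption (\ref{cond_phi}) via the monotonicity of $H$.

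The candidate optimal strategy is therefore an initial block of size $\psi^0=\varphi-\beta\int_0^t\xi^*_r\,dr$—approximated inside $\mathcal A_t(\varphi)$ by $\zeta^{I,\delta}$ exactly as in Theorem~\ref{th_phi_small}—followed by the gradual rate $\zeta^*_r$ dictated by the singular arc. Passing $\delta\downarrow 0$ using Theorem~\ref{conti}, the initial block contributes the middle term $(s/\alpha)(1-e^{-\alpha\psi^0})$ of (\ref{th_value_fnc}). The gradual contribution $\int_0^t\zeta^*_r\E[S_r]\,dr$, after substituting $\tilde X_r=F+2ye^{-2\beta r}-\alpha\xi^*_r$ and integrating by parts against the ODE satisfied by $\xi^*_r$ (derived from differentiating $P(\xi^*_r)=e^{-ye^{-2\beta r}}\lambda^*/\alpha$), should collapse to the third term $\beta\int_0^t\xi^*_r\exp(F-\alpha\xi^*_r+1+ye^{-2\beta r})\,dr$.

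The main obstacle will be the upper bound against general \emph{adapted} $\zeta$, since without Gaussianity $\E[\zeta_r S_r]\ne\zeta_r\E[S_r]$. I would handle this by verification: check that the right hand side of (\ref{th_value_fnc}), viewed as a function $V^{\star}(t,w,\varphi,s)$, is a classical (super)solution of the HJB equation on the relevant region, then apply Ito's lemma with a localisation argument to obtain $\E[u_{\mathrm{RN}}(W_t,\varphi_t,S_t)]\le V^{\star}$ for every admissible strategy, the linearity of the Hamiltonian in $\zeta$ ensuring the residual term has the correct sign. All remaining steps—solving $H_{t,\varphi}(\lambda^*)=0$ by monotonicity and collapsing the gradual integral—are mechanical but algebraically delicate in the tracking of constants between the two parametrisations.
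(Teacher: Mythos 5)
Your route (continuous-time Pontryagin plus HJB verification) is genuinely different from the paper's, which never touches an HJB equation: the paper reduces to the \emph{discrete-time deterministic} problem $f^n(n)$ via Proposition \ref{prop_conv_fn} and Theorem \ref{converge}, shows the finite-dimensional objective $\tilde f^n$ tends to $-\infty$ at infinity on the constraint set (Lemmas \ref{conv_Q}--\ref{lemma_divergence}), proves the Lagrange system has a \emph{unique} solution $(\hat\psi^n,\hat\lambda^n)$ with $H_n(\hat\lambda^n)=0$ (Lemma \ref{lemma_Lagrange}), so that the unique critical point is the global maximum, and then passes $n\to\infty$. Your first-order analysis on the singular arc does recover $P(\xi_r)=e^{-e^{-2\beta r}y}\lambda/\alpha$ (up to a renormalisation of $\lambda$ by $\alpha e^{-F-y}$), which is the correct core computation. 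But there are two genuine gaps.

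First, your candidate strategy is wrong: it omits the terminal block $q^*$, which is essential here. The optimiser is initial block $p^*=\xi^*_0+(z-2y)/\alpha$, gradual rate $\zeta^*_r=\beta\xi^*_r+\tfrac{d}{dr}\eta^*_r$, and terminal block $q^*>0$; see (\ref{nearly_optimal_strategy}) and (\ref{rw_value_fnc}). Your block of size $\varphi-\beta\int_0^t\xi^*_r\,dr$ followed by the rate $\zeta^*_r$ sells $\varphi+\eta^*_t-\eta^*_0>\varphi$ and violates the volume constraint, while the middle term of (\ref{th_value_fnc}) is \emph{not} the initial-block proceeds --- it aggregates the initial block, the $d\eta^*$ part of the gradual arc, and the terminal block, since $\eta^*_t+q^*=\varphi-\beta\int_0^t\xi^*_r\,dr$. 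Relatedly, your closure condition ``transversality $p(t)=0$ plus sell-out without a terminal block'' is not the one that determines $\lambda^*$; the correct closure is the first-order condition for the terminal block (in the paper, $\partial_{x_{n-1}}\mathcal L_n=0$), which is exactly what produces $H(\lambda)=0$. Second, the upper bound over \emph{adapted} strategies is the hardest part and you leave it entirely as a plan: a verification argument needs a (quasi-variational) HJB for a value function defined here only as a limit of discrete problems, smoothness of the implicitly defined $\lambda^*(t,\varphi)$, and a supersolution property on the whole reachable region --- including states where (\ref{cond_phi}) fails and the formula changes (Theorem \ref{th_phi_small}). None of this is supplied, whereas the paper's discrete coercivity-plus-uniqueness argument settles the global maximality completely and elementarily.
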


We can construct a nearly optimal strategy as follows (with $\delta \downarrow 0$): 
\begin{eqnarray}\label{nearly_optimal_strategy}
\hat{\zeta }^{\delta }_r = \frac{p^*}{\delta }1_{[0,\delta ]}(r) + 
\zeta ^*_r + \frac{q^*}{\delta }1_{[t - \delta, t]}(r), 
\end{eqnarray}
where $p^* = \xi ^*_0 + (z - 2y) / \alpha$ and
\begin{eqnarray*}
\zeta^*_r &=& \beta \xi^*_r - 
\frac{2\beta \lambda^*e^{-2\beta r}y\exp (\alpha \xi ^*_r - e^{-2\beta r}y)}{\alpha ^2(2 - \alpha \xi ^*_r)} + 
\frac{2\beta y}{\alpha }e^{-2\beta r}\\
&=& \beta \xi ^*_r + 
\frac{2\beta ye^{-2\beta r}}{\alpha (2 - \alpha \xi^*_r)}, \\
q^* &=& \varphi - \beta \int^t_0\xi ^*_rdr - \xi^*_t - \frac{z}{\alpha } + \frac{y}{\alpha }(1 + e^{-2\beta t}). 
\end{eqnarray*}
Here, the second equality of the definition of $\zeta^*_r$ comes from
$P(\xi^*_r) = \exp (-e^{-2\beta r}y)\allowbreak \lambda^* / \alpha$. 
By the inequalities (\ref {cond_phi}), $z \geq 2y$, and 
$0\leq \xi^*_r \leq \xi^*_0 \leq 1 / \alpha $, 
we see that each of $p^*$, $\zeta^*_r$, and $q^*$ is positive. 

Similarly to in the previous subsection, 
$(\hat{\zeta }^\delta _r)_r$ also consists of three terms: 
initial (almost) block liquidation, 
gradual intermediate liquidation, and terminal (almost) block liquidation. 
However, unlike the case where $\sigma = 0$, 
the speed of execution becomes slower as time passes when $\sigma > 0$. 
This is result is similar to that of Makimoto and Sugihara~\cite{Makimoto-Sugihara}, 
in which the optimal execution problem in an LOB market model with stochastic liquidity is studied. 

Figure \ref {fig_OU} shows a representation of an optimal strategy. 
We can rewrite the value function (\ref {th_value_fnc}) as 
the sum of an initial cash amount and the proceeds of initial/intermediate/terminal liquidation: 
\begin{eqnarray}\nonumber 
&&V_t(w, \varphi ,s ; u_{\mathrm {RN}})\\\label{rw_value_fnc}
&=& w + 
\frac{1 - e^{-\alpha p^*}}{\alpha}s + 
s\int^t_0e^{-\alpha \eta^*_r}\zeta^*_rdr + 
\frac{1 - e^{-\alpha q^*}}{\alpha }se^{-\alpha \eta^*_t}, 
\end{eqnarray}
where $\eta^*_r = \xi^*_r - (1 + e^{-2\beta r})y/\alpha + z/\alpha $. 
Note that $p^* = \eta^*_0$ and 
$\zeta^*_rdr = d\eta^*_r + \beta \xi ^*_rdr$. 

\begin{figure}[t]
\begin{center}
\includegraphics[height = 4.69cm,width=7.296cm]{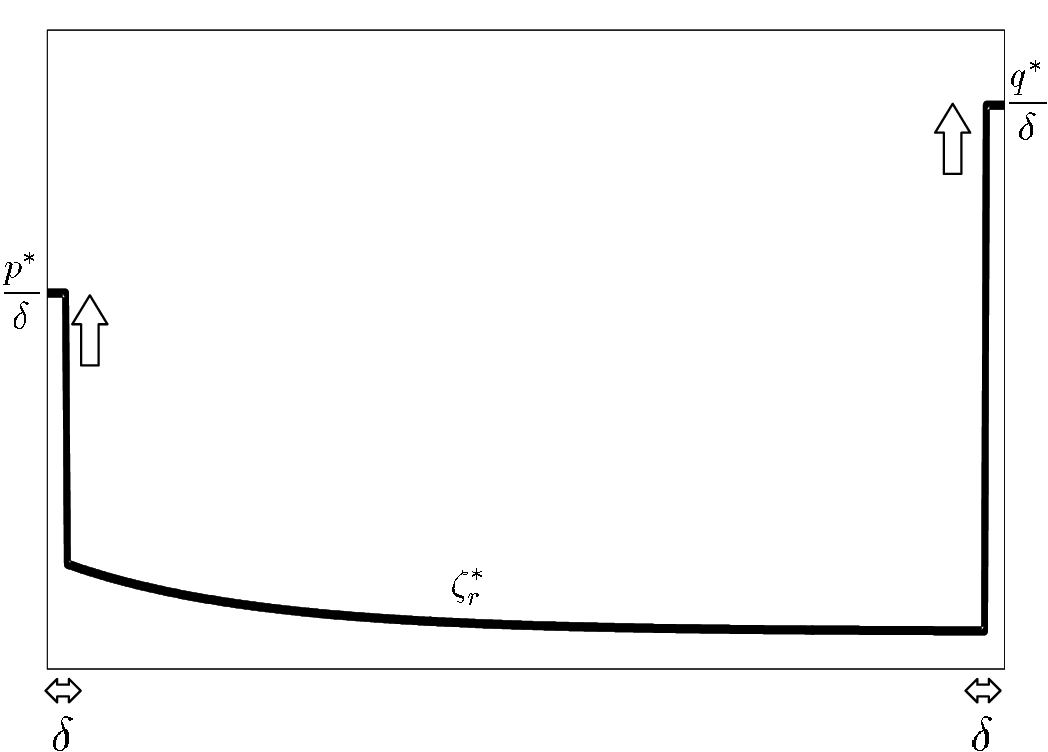}
\includegraphics[height = 4.5cm,width=7cm]{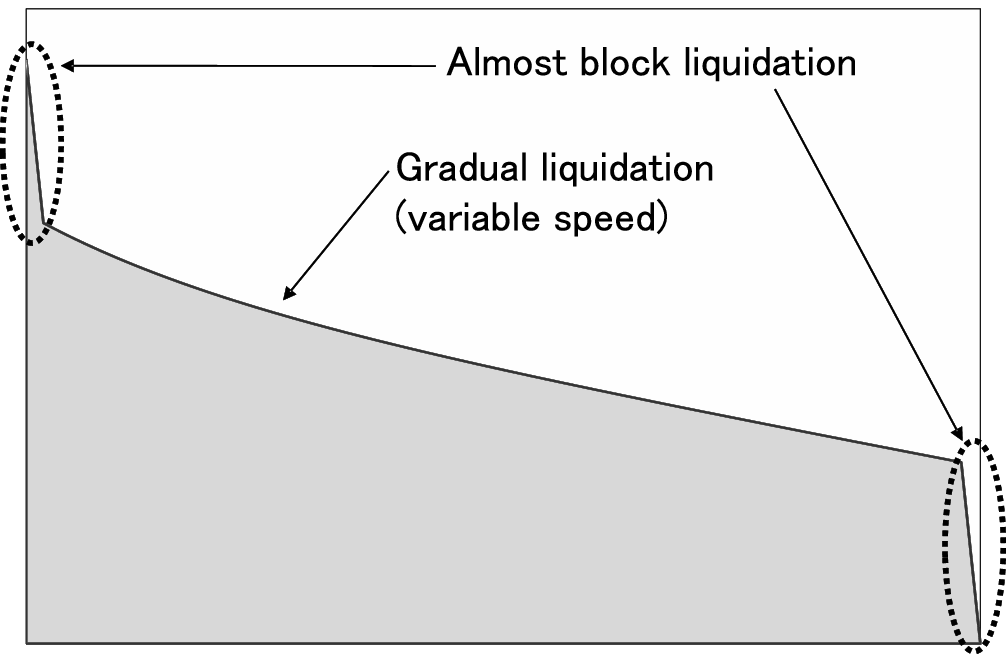}
\caption{Shape of a nearly optimal strategy $(\zeta ^\delta _r)_r$ (left) and 
the corresponding amount of the security holdings (right), both when $\sigma > 0$. 
The horizontal axes are time ($r$). }
\label{fig_OU}
\end{center}
\end{figure}

\subsection{When does MI cause gradual liquidation? Case studies}\label{sec_gradual}

In this subsection, we briefly discuss the properties of MI functions that incentivize 
gradual liquidation of the security and give some examples. 

First, as a benchmark model, we consider the execution problem with risk-neutral trader 
in a Black--Scholes-type model with no MI: 
\begin{eqnarray*}
V^{\mathrm{ideal}}(\Phi_0) &=& \sup_{(\zeta_r)_r\in \mathcal {A}_1(\Phi _0)}
\E[W_1]
\end{eqnarray*}
subject to $(W_0, S_0) = (w, s_0)$, (\ref {fluc_W}), (\ref {exp_X}), and 
\begin{eqnarray}
dX_r = \mu dr + \sigma dB_r, 
\end{eqnarray}
where $\mu \in \mathbb {R}$ and $\sigma \geq 0$ are constants. 
For the sake of simplicity, we assume that $\tilde{\mu } := -\mu + \sigma^2/2 > 0$. 
In this case, an optimal strategy is initial block liquidation: 
since the expected security price decreases as time passes, 
a risk-neutral trader will sell the security as soon as possible. 
Then, we see that $V^{\mathrm {ideal}}(\Phi _0) = \Phi _0s_0$. 

Now, we introduce MI functions. 
We define $V(\Phi _0) = V_1(0, \Phi _0, s_0; u_{\mathrm {RN}})$, where 
the log-price fluctuates according to 
\begin{eqnarray}
dX_r = \mu dr + \sigma dB_r - g(\zeta_r)dr. 
\end{eqnarray}
Note that $V_1(0, \Phi _0, s_0; u_{\mathrm {RN}}) = \hat{V}_1(0, \Phi _0, s_0; u_{\mathrm {RN}})$ holds 
in this case by Proposition 5.1 of \cite {Kato}; $\hat{V}_1$ will be defined in Section \ref {sec_derivation}. 
One of the simplest example is the linear MI function 
$g(\zeta ) = \alpha \zeta $. 
However, Theorem 5.2 in Kato~\cite{Kato} tells us that 
$V(\Phi_0) = (1-\exp (-\alpha \Phi _0))s_0/\alpha$, and 
the optimal strategy is (almost) initial block liquidation (\ref {def_hat_I}). 
This implies that a linear MI is not sufficient to describe 
all situations in which a trader will gradually liquidate. 
On the other hand, Theorem 5.4 of that paper shows that 
the optimal strategy is gradual liquidation when $g(\zeta)$ is quadratic; 
in particular, the optimal strategy with small $\Phi _0$ is 
a time-weighted average price (TWAP) strategy, that is, 
liquidation with constant speed 
(the same result can be obtained when $g$ is S-shaped; see \cite {Kato_JSIAM}). 
Note that TWAP strategies are  equivalent to volume-weighted average price (VWAP) strategies 
with the time parameter $t$ measuring volume time (see Remark 22.7 in \cite {Gatheral-Schied}). 

Here, Theorems \ref {eg_OU_const}--\ref {eg_OU} from the previous subsections give 
another source of gradual liquidation. 
In our model framework, a price-recovery effect also causes gradual intermediate liquidation 
when $\Phi _0$ is large enough, even when the MI function is linear. 

Therefore, the convexity of MI functions and price-recovery effects 
are essential properties in the study of execution problems. 
Note that a price-recovery effect can be identified with market resilience or, 
in other words, a transient MI function. 

\begin{rem}
{\rm Linearity of MI functions is also discussed from another point of view. 
In several studies of execution problems, 
a permanent MI function is often assumed to be linear. 
One of the reasons is that this accords with results from empirical studies. 
For example, Almgren et al.~\cite{Almgren-et-al} estimate the form of MI functions in the real market 
and find that permanent MI function is approximated by a linear function. 
Another reason is the absence of price manipulation: 
Forsyth et al.~\cite{Forsyth-et-al}, 
Gatheral~\cite{Gatheral}, and Huberman and Stanzl~\cite{Huberman-Stanzl1} 
assert that we can construct a price manipulation strategy when the permanent MI function is nonlinear. 
That is, in such cases
a trader can earn money by unscrupulously utilizing the effect of MI. 

In contrast, 
in the LOB markets, nonlinear MI functions are often observed in 
empirical studies 
(see Bouchard et al.~\cite{Bouchaud-Mezard-Potters}, Weber and Rosenow~\cite{Weber-Rosenow}, and 
Zovko and Farmer~\cite{Zovko-Farmer}, for instance). 
Moreover, 
Alfonsi and Schied~\cite{Alfonsi-Schied} and Alfonsi et al.~\cite{Alfonsi-Fruth-Schied, Alfonsi-Schied-Slynko} 
show that the opportunity for price manipulation is not present 
even when the MI function is convex under some technical conditions. 
Therefore, nonlinearity of the MI function is not necessarily inappropriate. 

In this paper, 
we mainly treat the case of a linear MI function $g(\zeta)$, 
but we can also study the case of nonlinear $g(\zeta)$ in a similar way. 
It is meaningful to study the execution problem with both convex MI function $g$ and 
a price-recovery effect in our framework and to investigate topics related to the opportunity for price manipulation 
(the unaffected price, given by (\ref {fluc_X})--(\ref {exp_X}) with $\alpha = 0$, is not a martingale; 
nevertheless, similarity to the results of LOB models may ensure the absence of price manipulation in some sense).}
\end{rem}

\section{Concluding Remarks}\label{sec_conclusion}

In this paper, 
we solved the optimal execution problem 
in the case where the security price follows a geometric Ornstein--Uhlenbeck process and the MI function is linear. 
This case is important because it covers cases where the security price has a mean-reverting property. 
We showed that an optimal static strategy is a mixture of 
initial/terminal block liquidation and gradual intermediate liquidation 
when the initial amount of the security holdings is large. 
When the volatility parameter is equal to zero, 
the optimal strategy has the same form as the strategies found in Alfonsi et al.~\cite{Alfonsi-Fruth-Schied} and 
Obizhaeva and Wang~\cite{Obizhaeva-Wang}. 
In this case, a trader should sell at a constant speed during the time horizon. 
When the volatility is positive, the speed of gradual liquidation is not constant; instead, the speed is decaying and 
the form of the optimal strategy is similar to that found in Makimoto and Sugihara~\cite{Makimoto-Sugihara}. 

Our result demonstrates a case in which MI causes gradual liquidation. 
In a real market, traders sell shares of a security gradually to avoid costs due to MI 
because price recovery is expected. 
As noted in Section \ref {sec_gradual}, 
our results combined with findings from \cite {Kato} imply
that convexity (or nonlinearity) of MI functions and price-recovery effects (or resilience) 
are important factors in the construction of an MI model. 

In Section \ref {sec_gOU}, 
we restrict the class of admissible strategies to static strategies in view of IS algorithms. 
It is obviously important to study the optimal execution problem for adaptive strategies 
($\hat {\mathcal {A}}_t(\varphi)$ in Section \ref {sec_derivation}), which is one of our future tasks.

\appendix 
\section{Appendix}\label{sec_appendix}

\subsection{Derivation of the continuous-time value function}\label{sec_derivation}

In this subsection, 
we study the derivation of the continuous-time model of an optimal execution problem by following the arguments in Appendix A of Kato \cite{Kato}. 
We characterize our 
continuous-time value function 
as a limit of discrete-time models in more general settings. 

We define a discrete-time version of the value function with time intervals of width $1/n$ as follows: 
\begin{eqnarray}\label{def_value_discrete}
\hat{V}^n_k(w,\varphi ,s ; u) = \sup_{(\psi^n_l)^{k-1}_{l=0}\in \hat{\mathcal {A}}^n_k(\varphi)}
\E[u(W^n_k,\varphi ^n_k, S^n_k)]
\end{eqnarray}
subject to 
\begin{eqnarray}
\label{W_varphi}
&&W^n_{l+1} = W^n_l+\psi^n_lS^n_l\exp (-g_n(\psi^n_l)), \ \ 
\varphi ^n_{l+1} = \varphi^n_l - \psi^n_l, \\
\label{fluctuate_X}
&&X^n_{l+1} = Y\Big(\frac{l+1}{n} ; \frac{l}{n}, X^n_l-g_n(\psi^n_l)\Big), 
\ \ S^n_{l+1} = \exp (X^n_{l+1}) 
\end{eqnarray}
and $(W^n_0, \varphi^n_0, S^n_0) = (w, \varphi , s)$, where 
$u(w, \varphi , s)$ is a nondecreasing and continuous function with polynomial growth rate in each of $w, \varphi$, and $s$, 
$g_n: [0,\infty )\longrightarrow [0,\infty )$ is a nondecreasing and continuously 
differentiable function that satisfies $g_n(0) = 0$, 
$Y(t ; r, x)$ is a solution of the stochastic differential equation 
\begin{eqnarray}\label{SDE_Y}
\left\{
\begin{array}{ll}
 	dY(t; r,x) = \sigma (Y(t; r,x))dB_t+b(Y(t; r,x))dt,& t\geq r,	\\
 	\hspace{2mm}Y(r; r,x) = x,&
\end{array}
\right.
\end{eqnarray}
and $b, \sigma  : \mathbb {R}\longrightarrow \mathbb {R}$ are Lipschitz-continuous functions. 
Here, $\hat{\mathcal {A}}^n_k(\varphi )$ is the set of stochastic processes 
$(\psi ^n_l)^{k-1}_{l=0}$ such that 
$\psi ^n_l$ is $\mathcal {F}_{l/n}$-measurable, $\psi ^n_l\geq 0$ for each $l=0,\ldots ,k-1$, 
and $\sum ^{k-1}_{l=0}\psi ^n_l\leq \varphi $. 
See \cite {Kato} for the precise definitions and financial implications of (\ref {def_value_discrete}). 
Note that $b$ and $\sigma$ are assumed to be bounded in \cite{Kato}, 
but here we only assume that 
$b$ and $\sigma$ have linear growth. Because of this, the OU case (\ref {fluc_X}) is included in our settings 
(i.e., uniqueness and existence of the solution $Y(\cdot ; r, x)$ is guaranteed for each $r\geq 0$ and $x\in \mathbb {R}$). 

We consider the limit as $n\rightarrow \infty $. 
Let $h : [0,\infty) \longrightarrow [0,\infty)$ be a nondecreasing continuous function. 
We quote the condition assumed in \cite{Kato}: \vspace{2mm}\\
$[A]$ \ 
$\lim _{n\rightarrow \infty }\sup _{\psi \in [0,\Phi _0]}
\Big |\frac{d}{d\psi }g_n(\psi )-h(n\psi )\Big | = 0$. \vspace{2mm}
\\
Now, we define $g(\zeta ) = \int ^\zeta_0h(\zeta')d\zeta'$ for $\zeta \in [0,\infty)$. 
This function corresponds to an MI function in a continuous-time model. 
We define 
\begin{eqnarray}\label{def_value_conti}
\hat{V}_t(w,\varphi ,s ; u) = \sup _{(\zeta _r)_{r}\in \hat{\mathcal {A}}_t(\varphi )}
\E [u(W_t,\varphi _t, S_t)] 
\end{eqnarray}
subject to (\ref {fluc_W})--(\ref {fluc_X_general}) and 
$\varphi _t = \varphi - \int ^t_0\zeta _vdv$. 
Here, $\hat{\mathcal {A}}_t(\varphi )$ denotes the set of 
admissible adaptive strategies, that is, 
$(\zeta _r)_{0\leq r\leq t}\in \hat{\mathcal {A}}_t(\varphi )$ is 
an $(\mathcal {F}_r)_r$-progressively measurable process such that 
\begin{itemize}
 \item [(a'.)] $\zeta _r\geq 0$ for each $r\in [0, T]$ almost surely, \ 
 \item [(b'.)] $\int ^T_0\zeta _rdr\leq \Phi _0$ almost surely, and 
 \item [(c'.)] $\sup _{r,\omega }\zeta _r(\omega )<\infty$. 
\end{itemize}
Note that condition (c'.) is a technical condition; see Appendix A of \cite {Kato} for the details. 
Here, we make a further assumption: \vspace{2mm}\\
$[B]$ \ For each $m \in \mathbb {N}$, 
there is a constant $C_m > 0$ such that 
$\E[\sup_{0\leq t\leq 1}\exp (mY(t ; 0, x))]\leq C_me^{mx}$. 

This condition does not seem to be natural in general, 
but our main model in Section \ref{sec_model} satisfies [B]. 

The following is a generalization of Theorem A.1 of \cite{Kato}. 

\begin{theorem} \ \label{converge}Assume that $\mathrm {[A]}$--$\mathrm {[B]}$ hold. 
For each $(w,\varphi,s)\in D$, $t\in [0,1]$ and $u\in \mathcal {C}$,
\begin{eqnarray}\label{th_2}
\lim_{n\rightarrow \infty }\hat{V}^n_{[nt]}(w,\varphi,s;u) = \hat{V}_t(w,\varphi,s;u), 
\end{eqnarray}
where $[nt]$ is the greatest integer less than or equal to $nt$. 
\end{theorem}

By this theorem, we can characterize (\ref{def_value_conti}) as 
the limit of a sequence of discrete-time value functions (\ref {def_value_discrete}). 
Note that we can also obtain a similar convergence result 
by restricting the classes of admissible strategies to classes of static strategies 
(Proposition \ref {prop_conv_fn} below). 

To prove Theorem \ref {converge}, we introduce the following lemma: 

\begin{lemma}\label{cond_of_X1} \ Let $t\in [0,1]$, $\varphi \geq 0$,  
$(\zeta _r)_{0\leq r\leq t}\in \hat{\mathcal {A}}_t(\varphi)$, and let 
$(X_r)_{0\leq r\leq t}$ be given by $(\ref{fluc_X_general})$. 
Then, there is a constant $C>0$, depending on only $b$ and $\sigma$, such that 
\begin{eqnarray}&&\nonumber 
\E \Big [\sup _{r\in [r_0,r_1]}
\Big |X_r-X_{r_0}+\int ^r_{r_0}g(\zeta _v)dv\Big |^4\Big ] \\\label{temp_X1}
&\leq & 
C(r_1-r_0)^2\{ 1 + (r_1-r_0)^3\int ^{r_1}_{r_0}\E [g(\zeta _v)^4]dv\}
\end{eqnarray}
for each $0\leq r_0\leq r_1\leq t$. 
\end{lemma}

The proof of Theorem \ref {converge} is given in almost the same way as the proofs of Propositions B.24--B.25, 
but we apply condition [B] and Lemma \ref {cond_of_X1} instead of Lemmas B.1 and B.3, respectively, from \cite{Kato}. 

Unlike the case where $b$ is bounded, the right hand side of (\ref {temp_X1}) depends on $(\zeta _r)_r$. 
However, this does not change the essential method of proving 
the statements analogous to Theorems 3.1--3.2 and 4.1--4.2 of \cite{Kato}, 
except for the continuity of the continuous-time value function at $t = 0$ when $h(\infty ) = \infty$. 
We omit the details here.

\subsection{Proofs}\label{sec_proofs}

For brevity, we assume $t = 1$ until the end of this section. 
We define 
\begin{eqnarray}\label{def_f_n}
f^n &=& 
\frac{1}{\alpha}\sup _{(\psi^n_k)_k\in \mathcal {A}^{n}_n(\varphi )}
\tilde{f}^n(\psi^n_0, \ldots, \psi^n_{n-1}). 
\end{eqnarray}
Here, $\mathcal {A}^{n}_n(\varphi)$ is the set of admissible deterministic strategies. 
That is, 
\begin{eqnarray*}
\mathcal {A}^{n}_n(\varphi ) = 
\left\{(x_l)^{n-1}_{l=0}\subset [0, \varphi]^n\ ; \ \sum ^{n-1}_{l = 0}x_l\leq \varphi\right\}, 
\end{eqnarray*}
$\tilde{f}^n(x)$, $x = (x_0, \ldots , x_{n-1})\in \mathbb {R}^n$, is defined by 
\begin{eqnarray*}
\tilde{f}^n(x)
&=& 
\alpha \sum ^{n-1}_{k=0}
\exp \left(c^k_nz - c^{2k}_ny - \alpha \sum ^{k-1}_{l = 0}c^{k-l}_nx_l\right )
\int^{(k+1)/n}_{k/n}nx_k\exp (-\alpha (nr - k)x_k)dr\\
&=& 
\sum ^{n-1}_{k=0}
\exp \left (c^k_nz - c^{2k}_ny - \alpha \sum ^{k-1}_{l = 0}c^{k-l}_nx_l\right)
(1-e^{-\alpha x_k}), 
\end{eqnarray*}
and $c_n = e^{-\beta /n}$. 
Since the function $\tilde{f}^n(x_0, \ldots , x_{n-1})$ is nondecreasing in $x_{n-1}$, 
we can replace $\mathcal {A}^{n}_n(\varphi )$ in (\ref {def_f_n}) with 
\begin{eqnarray*}
\mathcal {A}^{n, \mathrm {SO}}_n(\varphi) = 
\left\{ (x_l)^{n-1}_{l=0}\subset [0, \Phi _0]^n\ ; \ \sum ^{n-1}_{l = 0}x_l = \varphi\right\} 
\subset \mathcal {A}^{n}_n(\varphi). 
\end{eqnarray*}

The following proposition holds. 

\begin{proposition} \ \label{prop_conv_fn} 
\begin{itemize}
 \item [$\mathrm {(i)}$] $f^n\longrightarrow f$, \ \ $n\rightarrow \infty $, where 
\begin{eqnarray}\label{def_f}
f &=& \sup _{(\zeta _r)_r\in \mathcal {A}_1(\varphi )}\tilde {f}((\zeta _r)_r), \\\nonumber 
\tilde{f}((\zeta _r)_r) &=& \int ^1_0\zeta _r\exp \left( e^{-\beta r}z  - e^{-2\beta r}y - 
\alpha e^{-\beta r}\int ^r_0e^{\beta v}\zeta _vdv\right)dr, 
\end{eqnarray}
 \item [ $\mathrm {(ii)}$ ] $V_1(w, \varphi , s ; u_{\mathrm {RN}}) = w + e^{F+y}f$. 
\end{itemize}
\end{proposition}

\begin{proof}
An argument analogous to the proof of Proposition B.24--25 in Kato~\cite{Kato} yields assertion (i). 
Assertion (ii) can be obtained by straightforward calculation. 
\end{proof}

\subsubsection{Proof of Theorems \ref {th_no_MI}--\ref {th_phi_small}}\label{sec_proof0}

We give the proof of only Theorem \ref {th_phi_small} 
because the proof of Theorem \ref {th_no_MI} is essentially the same. 
By straightforward calculation, we obtain 
\begin{eqnarray*}
f \geq \lim _{\delta \rightarrow 0}\tilde{f}((\hat{\zeta }^{0,\delta }_r)_r) = 
\frac{1 - e^{-\alpha \varphi }}{\alpha}e^{z-y}, 
\end{eqnarray*}
where $(\hat{\zeta }^{0,\delta }_r)_r$ is defined as (\ref {def_hat_I}). 
On the other hand, 
for any $(\zeta _r)_r\in \mathcal {A}_1(\varphi)$, we have 
\begin{eqnarray*}
\tilde{f}((\zeta _r)_r) \leq  
\int ^1_0\zeta _r\exp \left( e^{-\beta r}z  - e^{-2\beta r}y - 
\alpha e^{-\beta r}\eta _r\right)dr, 
\end{eqnarray*}
where $\eta _r = \int ^r_0\zeta _vdv$. 
From the relation $z - 2y \geq \alpha \varphi \geq \alpha \eta_r$, we have that 
\begin{eqnarray*}&&
\{ z - y - \alpha \eta _r\} - 
\{ e^{-\beta r}z  - e^{-2\beta r}y - 
\alpha e^{-\beta r}\eta _r \} \\
&=& 
(1-e^{-\beta r})(z-(1+e^{-\beta r})y - \alpha \eta_r) \geq 0. 
\end{eqnarray*}
Thus, 
\begin{eqnarray*}
\tilde{f}((\zeta_r)_r)\leq \int ^t_0\exp (z - y - \alpha \eta_r)d\eta _r \leq 
\frac{1 - e^{-\alpha \varphi}}{\alpha}e^{z - y}. 
\end{eqnarray*}
Therefore, we get $f\leq (1 - e^{-\alpha \varphi })e^{z-y} / \alpha$, 
and this completes the proof of Theorem \ref {th_phi_small}. \qed

\subsubsection{Proof of Theorems \ref {eg_OU_const}--\ref {eg_OU}}\label{sec_proof}

Because Theorem \ref {eg_OU_const} is a corollary of Theorem \ref {eg_OU} 
(a small technical argument is necessary to generalize the assumption (\ref {cond_phi})), 
we present the proof of only Theorem \ref{eg_OU}. 

Let $\Xi^n(\varphi ) = \{ (x_0, \ldots, x_{n-1})\in \mathbb {R}^n\ ; \ x_0 + \cdots + x_{n-1} = \varphi\}$. 
Note that $\mathcal {A}^{n, \mathrm {SO}}_k(\varphi ) \subset \Xi ^n(\varphi )$. 
We set $\tilde {Q}^n_k(x) = \sum ^{l}_{m = 0}c^{l-m}_nx_m$ and 
$Q^n_k(x) = -zc^k_n + yc^{2k}_n + \alpha \tilde {Q}^n_k(x)$. 

\begin{lemma} \ \label{conv_Q}
$\min_{k = 0, \ldots , n-1 }Q^n_k(x)\ \longrightarrow \ -\infty $ as $|x|\rightarrow \infty $ on 
$\Xi^n(\varphi )$. 
\end{lemma}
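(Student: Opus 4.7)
The plan is to reduce the statement to coercivity of $\min _l\tilde{Q}^n_l$ on $\Xi ^n(\varphi )$, and then exploit the lower-triangular structure of the map $x\mapsto (\tilde{Q}^n_0(x),\ldots ,\tilde{Q}^n_{n-1}(x))$ together with the sum constraint.

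First, since $|{-}zc_n^l + yc_n^{2l}|\leq |z|+|y|$ uniformly in $l\in \{0,\ldots ,n-1\}$, it suffices to prove that $\min _l\tilde{Q}^n_l(x)\rightarrow -\infty $ as $|x|\rightarrow \infty $ on $\Xi ^n(\varphi )$. The recursion $\tilde{Q}^n_l(x) = x_l + c_n\tilde{Q}^n_{l-1}(x)$ (with $\tilde{Q}^n_{-1}\equiv 0$) shows that the map $L : x \mapsto (\tilde{Q}^n_0(x),\ldots ,\tilde{Q}^n_{n-1}(x))$ is lower triangular with unit diagonal, hence a linear bijection of $\Bbb {R}^n$; in particular $|x|\to \infty $ is equivalent to $\max _l|\tilde{Q}^n_l(x)|\to \infty $. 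Telescoping the inverse identity $x_l = \tilde{Q}^n_l(x) - c_n\tilde{Q}^n_{l-1}(x)$ rewrites the constraint $\sum ^{n-1}_{l=0}x_l = \varphi $ as
\begin{eqnarray*}
\varphi \ =\ \tilde{Q}^n_{n-1}(x) + (1-c_n)\sum ^{n-2}_{l=0}\tilde{Q}^n_l(x).
\end{eqnarray*}

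With this identity I would argue by contradiction. If the claim fails, there exist $M>0$ and a sequence $x^{(j)}\in \Xi ^n(\varphi )$ with $|x^{(j)}|\to \infty $ and $\tilde{Q}^n_l(x^{(j)})\geq -M$ for every $l$ and $j$. The pointwise lower bound gives $\sum ^{n-2}_{l=0}\tilde{Q}^n_l(x^{(j)})\geq -(n-1)M$, so the identity yields $\tilde{Q}^n_{n-1}(x^{(j)})\leq \varphi + (1-c_n)(n-1)M$, whence $\tilde{Q}^n_{n-1}(x^{(j)})$ is bounded. Re-inserting this bound into the identity, $\sum ^{n-2}_{l=0}\tilde{Q}^n_l(x^{(j)}) = (\varphi - \tilde{Q}^n_{n-1}(x^{(j)}))/(1-c_n)$ is bounded above by some $B$; combined with $\tilde{Q}^n_l(x^{(j)})\geq -M$, each individual term satisfies $\tilde{Q}^n_l(x^{(j)})\leq B + (n-2)M$. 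Hence every $\tilde{Q}^n_l(x^{(j)})$ is bounded, and bijectivity of $L$ forces $x^{(j)}$ to be bounded, contradicting $|x^{(j)}|\to \infty $.

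The only substantive step is the division by $1-c_n$, which is legitimate because $c_n = e^{-\beta /n}<1$ under the mean-reversion assumption $\beta >0$. Without mean reversion the statement is in fact false (take $x = (N,-N,0,\ldots ,0)\in \Xi ^n(0)$: then $\min _l\tilde{Q}^n_l(x) = 0$ while $|x|\to \infty $), so the entire content of the lemma is a quantitative use of the strict inequality $c_n<1$.
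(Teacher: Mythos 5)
Your proof is correct, and it rests on the same two ingredients as the paper's: the pointwise lower bound $\tilde{Q}^n_l(x)\geq -M$ together with the sum constraint, exploited after a division by $1-c_n$. The difference is in the organization. The paper works in the original coordinates $x_0,\ldots ,x_{n-1}$: it derives the chain of inequalities $\sum ^{k'}_{k=0}\bigl(\sum ^{k'-k}_{l=0}c^l_n\bigr)x_k\leq a_n(M+\varphi )$ by an induction down the index $k'$, and then extracts bounds $|x_k|\leq C_n(M+\varphi )$ one coordinate at a time by a second induction. You instead pass once and for all to the coordinates $(\tilde{Q}^n_0,\ldots ,\tilde{Q}^n_{n-1})$ via the unit lower-triangular bijection $L$, note that the constraint telescopes to the single identity $\varphi =\tilde{Q}^n_{n-1}+(1-c_n)\sum ^{n-2}_{l=0}\tilde{Q}^n_l$ (which is exactly the paper's display (A.2) read in the $\tilde{Q}$-variables), and conclude boundedness of all the $\tilde{Q}^n_l$ in two lines, with boundedness of $x$ then free from the norm equivalence $\|x\|\asymp \|Lx\|$. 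This buys you a shorter argument with no induction and makes the role of $1-c_n>0$ transparent; your closing remark that the lemma genuinely fails when $c_n=1$ (e.g.\ $x=(N+\varphi ,-N,0,\ldots ,0)$ keeps $\min _l\tilde{Q}^n_l$ bounded) is a worthwhile observation that the paper does not make, and is consistent with the paper since $y=\sigma ^2/(4\beta )$ presupposes $\beta >0$ anyway. The only cosmetic caveat is that the reduction from $Q^n_k$ to $\tilde{Q}^n_k$ uses $\alpha >0$ and the uniform bound on $-zc^k_n+yc^{2k}_n$, both of which you state; nothing is missing.
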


\begin{proof} 
It suffices to show that $\min_{k = 0, \ldots , n-1 }\tilde{Q}^n_k(x)\ \longrightarrow \ -\infty $. 
Choose any $M > 0$. 
Let $x\in \Xi ^n(\varphi)$ be such that 
$\min_{k = 0, \ldots, n-1}\tilde{Q}^n_k(x) \geq -M$. 
Then, it holds that 
\begin{eqnarray}\label{temp_est_x1}
x_k + c_nx_{k-1} + \cdots + c^k_nx_0 \geq -M, \ \ k = 0, \ldots, n - 1. 
\end{eqnarray}
From (\ref {temp_est_x1}) with $k = n - 1$ and the equality $x_{n-1} + \cdots + x_0 = \varphi$, 
we observe 
\begin{eqnarray}\label{temp_est_x2}
\sum^{n - 2}_{k = 0}\left( \sum^{n - 2 - k}_{l = 0}c^l_n \right) x_k 
\leq \frac{M + \varphi }{1 - c_n}. 
\end{eqnarray}
By (\ref {temp_est_x2}) and (\ref {temp_est_x1}) with $k = n - 2$, we have 
\begin{eqnarray*}
\sum ^{n - 3}_{k = 0}\left(\sum^{n - 3 - k}_{l = 0}c^l_n\right) x_k \leq 
\left(\frac{1}{1 - c_n} + 1\right) (M + \varphi). 
\end{eqnarray*}
Inductively, 
\begin{eqnarray}\label{temp_est_x3}
\sum^{k}_{k' = 0}\left(\sum^{k - k'}_{l = 0}c^l_n\right) x_{k'} \leq 
\left(\frac{1}{1 - c_n} + n - 2 - k\right) (M + \varphi) \leq a_n(M + \varphi) 
\end{eqnarray}
for $k = 0, \ldots , n - 2$, where $a_n = \{ (1 - c_n)^{-1} + n\}$. 

By (\ref {temp_est_x1}) and (\ref {temp_est_x3}) with $k = 0$, we have $-M\leq x_0\leq a_n(M + \varphi)$. 
Similarly, by (\ref {temp_est_x1}) and (\ref {temp_est_x3}) with $k = 1$, we have 
$-(1 + a_nc_n)(M + \varphi) \leq x_1 \leq (a_n + 1 + c_n )(M + \varphi)$. 
By an inductive calculation, we arrive at $|x_k|\leq C_n(M + \varphi), k = 0, \ldots , n - 2$ for some $C_n > 0$. 
Combining this with the relation $x\in \Xi ^n(\varphi)$, we also see that $|x_{n-1}|\leq C'_n(M + \varphi)$ 
for some $C'_n > 0$. 

The above arguments tell us the following: 
if a sequence $(x^{(N)})_N\subset \Xi ^n(\varphi )$ satisfies \allowbreak $\lim _{N\rightarrow \infty }\min _k\allowbreak \tilde{Q}^n_k(x^{(N)}) \neq -\infty $, 
then $(x^{(N)})_N$ is bounded. 
The desired assertion follows by contradiction. 
\end{proof}

\begin{lemma} \ \label{lemma_divergence}
$\tilde{f}^n(x_0, \ldots , x_{n-1}) \longrightarrow -\infty$ as 
$|x| \rightarrow \infty $ on $\Xi ^n(\varphi )$. 
\end{lemma}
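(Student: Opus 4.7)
The plan is to reduce the lemma to Lemma~\ref{conv_Q} by first rewriting $\tilde f^n(x)$ in a form that makes the leading behaviour in terms of $\tilde Q_k^n(x)$ transparent. Using the elementary identity $x_k=\tilde Q_k^n(x)-c_n\tilde Q_{k-1}^n(x)$ (with the convention $\tilde Q_{-1}^n\equiv 0$), set
\begin{eqnarray*}
E_k \;=\; \exp\!\bigl(c_n^k z-c_n^{2k}y\bigr)>0, \qquad A_k \;=\; \exp\!\bigl(-\alpha\tilde Q_k^n(x)\bigr)>0,
\end{eqnarray*}
with $A_{-1}=1$. A short manipulation of the definition of $\tilde f^n$ yields the representation
\begin{eqnarray*}
\tilde f^n(x) \;=\; \sum_{k=0}^{n-1} E_k\bigl[A_{k-1}^{\,c_n}-A_k\bigr].
\end{eqnarray*}
Note that the exponent $c_n\in(0,1)$ appearing on $A_{k-1}$ on the ``positive'' side (but not on the ``negative'' $A_k$ side) is exactly what will give us the required decay.

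Next I invoke Lemma~\ref{conv_Q}: as $|x|\to\infty$ on $\Xi^n(\varphi)$, $\min_{0\le k\le n-1}Q_k^n(x)\to -\infty$. Since, for fixed $n$, the quantities $-zc_n^k+yc_n^{2k}$ stay bounded uniformly in $k$, this is equivalent to
\begin{eqnarray*}
\min_{0\le k\le n-1}\tilde Q_k^n(x)\;\longrightarrow\;-\infty, \qquad\text{i.e.,}\qquad A^*:=\max_{0\le k\le n-1}A_k\;\longrightarrow\;+\infty.
\end{eqnarray*}
Let $k^*$ denote an index achieving $A^*$; then $E_{k^*}A_{k^*}\ge (\min_k E_k)\,A^*$, and for every $k$ we have $A_{k-1}^{c_n}\le 1+(A^*)^{c_n}$.

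Combining these estimates, and using that every $E_k[A_{k-1}^{c_n}-A_k]$ is bounded above by $E_k A_{k-1}^{c_n}$ while one single term contributes $-E_{k^*}A_{k^*}$, we obtain
\begin{eqnarray*}
\tilde f^n(x) \;\le\; n\,\bigl(\max_k E_k\bigr)\bigl(1+(A^*)^{c_n}\bigr) \;-\;\bigl(\min_k E_k\bigr)A^*.
\end{eqnarray*}
Since $c_n\in(0,1)$ so that $(A^*)^{c_n}=o(A^*)$ as $A^*\to\infty$, and $\min_k E_k>0$, the right-hand side tends to $-\infty$, completing the proof.

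The main (and only non-routine) obstacle is spotting the right algebraic rewriting: without the identity $\tilde f^n(x)=\sum_k E_k[A_{k-1}^{c_n}-A_k]$ it is not obvious how the negative contributions dominate the positive ones. Once that identity is in hand, the argument is essentially that subtracting $A^*$ beats adding $(A^*)^{c_n}$, which is where the contractive factor $c_n=e^{-\beta/n}<1$ is crucial. No compactness or uniformity in $n$ is needed, as the lemma is stated for each fixed $n$.
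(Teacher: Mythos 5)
Your proof is correct and follows essentially the same route as the paper: the identity $\tilde f^n(x)=\sum_k E_k[A_{k-1}^{c_n}-A_k]$ is exactly the paper's rewriting of $\tilde f^n$ in terms of $e^{-Q^n_k(x)}$, and both arguments then combine Lemma~\ref{conv_Q} with the fact that $c_n<1$ makes the positive exponentials grow strictly slower than the negative ones. The only (immaterial) difference is bookkeeping: the paper pairs adjacent terms into $A_n(Q^n_k(x))=e^{-c_nQ^n_k(x)+y}-e^{-Q^n_k(x)}$, which is bounded above and tends to $-\infty$, whereas you bound all positive parts by $n(\max_kE_k)(1+(A^*)^{c_n})$ and subtract the single largest negative term.
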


\begin{proof} 
Let $A_n(p) = e^{-c_np + y} - e^{-p}$, $p\in \mathbb {R}$. This gives 
\begin{eqnarray*}
\tilde{f}^n(x) &=& 
\sum ^{n-1}_{k=0}(e^{-c_nQ^n_{k-1}(x) + yc^{2k-1}_n(1-c_n)} - e^{-Q^n_k(x)})\\
&\leq & 
e^{z-y} - e^{-Q^n_{n-1}(x)} + \sum ^{n-2}_{k = 0}A_n(Q^n_k(x)) 
\end{eqnarray*}
for any $x = (x_0, \ldots , x_{n-1})\in \mathbb {R}^n$, 
where $Q^n_{-1}(x) = -c^{-1}_nz + c^{-2}_ny$. 
We easily see that the function $A_n$ has an upper bound $C_{A, n}$. 
Hence, it holds that 
\begin{eqnarray*}
\tilde{f}^n(x) \leq 
e^{z-y} - \exp (-\min _kQ^n_k(x)) + 
A_n(\min _kQ^n_k(x)) + C_{A, n}(n-1). 
\end{eqnarray*}
Since $\lim _{p\rightarrow -\infty }A_n(p) = -\infty$, we have the desired assertion by the above inequality and Lemma \ref {conv_Q}. 
\end{proof}

A straightforward calculation gives 
\begin{lemma} \ \label{lem_eg_1}For each $k = 0, \ldots , n-2$, it holds that 
\begin{eqnarray*}
&&\frac{\partial}{\partial x_k}\tilde{f}^n(x_0, \ldots , x_{n-1})\\&=& 
c_n\frac{\partial}{\partial x_{k+1}}\tilde{f}^n(x_0, \ldots , x_{n-1}) + 
\alpha (1-c_n)\exp(-c^{2k}_ny)F^n_k\left (\sum ^k_{l = 0}c^{k - l}_nx_l - c^k_nz/\alpha \right), 
\end{eqnarray*}
where 
\begin{eqnarray}\label{eq_gamma}
F^n_k(x) &=& 
e^{-\alpha x}\left( 
1 + c_n\cdot \frac{1 - e^{\tilde{c}_n\gamma ^n_k(x)}}{\tilde{c}_n}
\right), \\
\gamma ^n_k(x) &=& \alpha x + (1+c_n)c^{2k}_ny, \ \ \tilde{c}_n = 1 - c_n. 
\end{eqnarray}
\end{lemma}
Formally, the Taylor expansion of $F^n_{[nr]}(x)$ is given by 
\begin{eqnarray*}
F^n_{[nt]}(x) &=& e^{-\alpha x}\left\{ 1 + c_n\gamma ^n_{[nt]}(x) + O(n^{-1})\right\} \\
&\longrightarrow &
e^{-\alpha x}\left\{ 1 + \alpha x + (1 + 1)e^{-\beta t}y \right\} = 
P(x) + 2e^{-\alpha x - \beta t}y, \ \ n\rightarrow \infty. 
\end{eqnarray*}
The following lemma gives higher-order estimates of the above calculation. 
\begin{lemma} \ \label{conv_unif_F}It holds that 
\begin{eqnarray*}
\max_{k = 0, \ldots , n - 1}
\sup_{x\in K}
\left| n(F^n_k(x) - P(x) + 2e^{-\alpha x }c^{2k}_ny) - G^n_k(x) 
\right| \longrightarrow 0, \ \ n\rightarrow \infty 
\end{eqnarray*}
for each compact set $K \subset \mathbb {R}$, where, 
\begin{eqnarray*}
G^n_k(x) &=& \beta e^{-\alpha x}(\alpha x + (2 + c_n)c^{2k}_ny - c_nR^n_k(x)), \\
R^n_k(x) &=& \int ^1_0\exp (v(1 - c_n)\gamma^n_k(x))(1-v)dv(\gamma^n_k(x))^2. 
\end{eqnarray*}
\end{lemma}

\begin{proof} 
Using (\ref {eq_gamma}) and Taylor's theorem, we have 
\begin{eqnarray*}
F^n_k(x) &=& 
e^{-\alpha x}\left\{ 
1 - c_n(\gamma ^n_k(x) - \tilde{c}_nR^n_k(x))
\right\} \\&=& 
P(x) - 2e^{-\alpha x }c^{2k}_ny + 
\tilde{c}_nG^n_k(x) / \beta. 
\end{eqnarray*}
Thus, it holds that 
\begin{eqnarray*}
\left| n(F^n_k(x) - P(x) + 2e^{-\alpha x }c^{2k}_ny) - G^n_k(x) \right| 
\leq 
|n\tilde{c}_n / \beta - 1|\cdot |G^n_k(x)|. 
\end{eqnarray*}
Since we have $n\tilde{c}_n \longrightarrow \beta $ as $n\rightarrow \infty $ and 
\begin{eqnarray}\label{est_Gnk}
|G^n_k(x)| \leq 2\beta e^{2\alpha |x| + 2y}(\alpha |x| + \alpha ^2|x|^2 + 3y + 4y^2), 
\end{eqnarray}
we have obtained the desired assertion. 
\end{proof}

Because $F^n_k$ is nonincreasing on $E^n_k$, 
we can define the (nonincreasing) inverse function $F^{n, -1}_k$ on $[0, \infty)$, where 
\begin{eqnarray*}
E^n_k = \left (-\infty , -\frac{1}{\alpha}\left(c^{2k}_n(c_n + 1)y + \frac{\log c_n}{1 - c_n}\right) \right]. 
\end{eqnarray*}
We consider an analogous approximation of $F^{n, -1}_k$, such as Lemma \ref{conv_unif_F}. 
For this, we let 
\begin{eqnarray*}
I(q) &=& \frac{d}{dq}P^{-1}(q) = \frac{\exp (\alpha P^{-1}(q))}{\alpha (\alpha P^{-1}(q) - 2)}, \\
J^n_k(q) &=& -\exp (-2c^{2k}_ny)I(\exp (-2c^{2k}_ny)q)G^n_k(F^{n, -1}_k(q)), \\
\varepsilon ^n_k(q) &=&  F^{n,-1}_k(q) - P^{-1}(\exp (-2c^{2k}_ny)q) + 2c^{2k}_ny / \alpha . 
\end{eqnarray*}

\begin{lemma} \ \label{lem_conv_F_inv}It holds that \\
$\mathrm {(i)}$ \ 
$\max _{k = 0, \ldots , n - 1}\sup _{0\leq q\leq M}
\left| \varepsilon ^n_k(q)\right| \longrightarrow 0$, \\
$\mathrm {(ii)}$ \ 
$\max _{k = 0, \ldots , n - 1}\sup _{0\leq q\leq M}
\left| n\varepsilon ^n_k(q) - J^n_k(q) \right| 
\longrightarrow 0$\\
as $n\rightarrow \infty $ for each $M > 0$. 
\end{lemma}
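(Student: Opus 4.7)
The key structural observation is that $(\tilde P_n^{(k)})^{-1}(q) := P^{-1}(e^{-2c_n^{2k} y} q) - 2c_n^{2k} y/\alpha$ is the inverse of $\tilde P_n^{(k)}(x) := P(x) - 2 c_n^{2k} y \, e^{-\alpha x} = e^{-\alpha x}(1 - \alpha x - 2c_n^{2k} y)$. Indeed, substituting $u = x + 2c_n^{2k} y/\alpha$ into the definition of $P$ gives $\tilde P_n^{(k)}(x) = e^{2c_n^{2k} y} P(u)$, which is immediately inverted. Consequently $\varepsilon_k^n(q) = F_k^{n,-1}(q) - (\tilde P_n^{(k)})^{-1}(q)$, and I write $y_n := F_k^{n,-1}(q)$, $y_0 := (\tilde P_n^{(k)})^{-1}(q)$. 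The plan is to apply the mean value theorem to $\tilde P_n^{(k)}$ on the interval joining $y_n$ and $y_0$, and then control the discrepancy $F_k^n(y_n) - \tilde P_n^{(k)}(y_n)$ using the exact identity and the uniform expansion from Lemma~\ref{conv_unif_F}.

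For part (i), I first check that both $y_n$ and $y_0$ lie in a common compact set $K \subset (-\infty, 2/\alpha)$, uniformly in $k \in \{0, \ldots, n-1\}$ and large $n$. The upper bound on $y_n$ comes from a short direct computation that $F_k^n$ vanishes at the right endpoint $x^{\ast}$ of $E_k^n$ (so $q \geq 0$ forces $y_n \leq x^{\ast}$), while the lower bound uses $F_k^n(x) \to \infty$ as $x \to -\infty$, uniformly in $k, n$. The analogous bounds on $y_0$ follow from continuity of $P^{-1}$ on $[0, M]$. The mean value theorem gives
\begin{equation*}
\tilde P_n^{(k)}(y_0) - \tilde P_n^{(k)}(y_n) = (\tilde P_n^{(k)})'(\xi_n)(y_0 - y_n),
\end{equation*}
and the left-hand side equals $F_k^n(y_n) - \tilde P_n^{(k)}(y_n) = \tilde c_n G_k^n(y_n)/\beta$ by the exact identity derived inside the proof of Lemma~\ref{conv_unif_F}. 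The derivative $(\tilde P_n^{(k)})'(x) = -\alpha e^{-\alpha x}(2 - \alpha x - 2c_n^{2k} y)$ is bounded above and bounded strictly away from zero on $K$ uniformly in $n, k$, because the corresponding $u_0 = P^{-1}(e^{-2c_n^{2k} y} q) \leq 1/\alpha$ stays uniformly away from $2/\alpha$. Combined with the uniform bound (\ref{est_Gnk}) on $|G_k^n|$, this yields $|\varepsilon_k^n(q)| = |y_n - y_0| = O(\tilde c_n) = O(1/n)$ uniformly, proving (i).

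For part (ii) I sharpen this to a first-order asymptotic. Lemma~\ref{conv_unif_F} gives $n(F_k^n(y_n) - \tilde P_n^{(k)}(y_n)) = G_k^n(y_n) + o(1)$ uniformly, while part (i) together with the uniform Lipschitz behavior of $(\tilde P_n^{(k)})'$ on $K$ lets me replace $\xi_n$ by $y_n$ in the mean value identity at cost $o(1)$, producing $n\varepsilon_k^n(q) = -G_k^n(y_n) / (\tilde P_n^{(k)})'(y_n) + o(1)$. Differentiating the explicit formula for $(\tilde P_n^{(k)})^{-1}$ and recalling $I = (P^{-1})'$ gives $1/(\tilde P_n^{(k)})'(y_0) = e^{-2c_n^{2k} y} I(e^{-2c_n^{2k} y} q)$; swapping $y_n$ for $y_0$ in this factor again costs only $o(1)$ by part (i), and hence $n\varepsilon_k^n(q) = -G_k^n(y_n) \, e^{-2c_n^{2k} y} I(e^{-2c_n^{2k} y} q) + o(1) = J_k^n(q) + o(1)$ uniformly. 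The main obstacle throughout is \textbf{uniformity in both $k$ and $n$}; the saving grace is that $c_n^{2k} \in [c_n^{2(n-1)}, 1]$ stays in a fixed compact subset of $(0, 1]$ for all large $n$ (since $c_n^{2(n-1)} = e^{-2\beta(n-1)/n} \to e^{-2\beta}$), which pins every $k$-dependent quantity---the shift $2c_n^{2k} y/\alpha$, the factor $e^{-2c_n^{2k} y}$, and the derivative values on $K$---to a common bounded region, keeping every constant independent of $k$ and $n$.
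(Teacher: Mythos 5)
Your proof is correct and follows essentially the same route as the paper: both rest on the exact expansion $F^n_k(x)=P(x)-2e^{-\alpha x}c^{2k}_ny+\tilde{c}_nG^n_k(x)/\beta $ from Lemma \ref{conv_unif_F}, a uniform compact localization of $F^{n,-1}_k(q)$, and a mean-value argument identifying the first-order error with $J^n_k(q)$ via $I=(P^{-1})'$. The only (harmless) variations are that you apply the mean value theorem to the forward map $\tilde{P}^{(k)}_n$ and invert its derivative, where the paper applies $P^{-1}$ to both sides and uses an integral mean-value bound on $I$, and that you obtain the compact set from the zero of $F^n_k$ at the right endpoint of $E^n_k$ rather than from monotonicity of $F^n_k$ in $n$ and $k$.
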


\begin{proof} 
Assertion (i) is a direct consequence of assertion (ii), 
so we will prove only (ii). 
Take any $q\in [0, M]$ and let $x^n_k = F^{n, -1}_k(q)$. 
Since $F^n_k(x)$ is nondecreasing with respect to $n$ and $k$ for each fixed $x$, 
we get $x^n_k\in K_M$ for any $n$ and $k$, where 
\begin{eqnarray*}
K_M = \left[ F^{1,-1}_0(M), \frac{\beta }{\alpha (1 - e^{-\beta })}\right]. 
\end{eqnarray*}
Let $\tilde{R}^n_k(x) = F^n_k(x) - P(x) + 2e^{-\alpha x }c^{2k}_ny$. 
From the relation
\begin{eqnarray*}
P(x^n_k) - 2e^{-\alpha x^n_k}c^{2k}_ny + \tilde{R}^n_k(x^n_k) = q, 
\end{eqnarray*}
we have 
\begin{eqnarray}
P(x^n_k + 2c^{2k}_ny / \alpha ) = \exp (-2c^{2k}_ny)(q - \tilde{R}^n_k(x^n_k)). 
\end{eqnarray}
Applying $P^{-1}$ to both sides and subtracting $P^{-1}(\hat{e}^n_kq)$, we obtain 
\begin{eqnarray*}
\varepsilon ^n_k(q) = 
P^{-1}(\hat{e}^n_k(q - \tilde{R}^n_k(x^n_k))) - P^{-1}(\hat{e}^n_kq), 
\end{eqnarray*}
where we denote $\exp (-2c^{2k}_ny)$ as $\hat{e}^n_k$  for brevity. 
Therefore, 
\begin{eqnarray}\nonumber 
\left| n\varepsilon ^n_k(q) - J^n_k(q) \right| 
&\leq &
\left| 
-\int ^1_0
I(\hat{e}^n_k(q - v\tilde{R}^n_k(x^n_k)))dv
\hat{e}^n_kn\tilde{R}^n_k(x^n_k) + \hat{e}^n_kI(\hat{e}^n_kq)G^n_k(x^n_k)
\right| \\\nonumber 
&\leq & 
\sup _{v\in [0, 1]}|I(\hat{e}^n_k(q-v\tilde{R}^n_k(x^n_k)))|\cdot |n\tilde{R}^n_k(x^n_k) - G^n_k(x^n_k)|\\
&& + 
|G^n_k(x^n_k)|\int ^1_0|I(\hat{e}^n_k(q-v\tilde{R}^n_k(x^n_k))) - I(\hat{e}^n_kq)|dv. 
\label{temp_calc}
\end{eqnarray}
Because Lemma \ref{conv_unif_F} implies 
\begin{eqnarray}\label{conv_Rnk}
\max _{k = 0, \ldots, n-1}\sup _{x\in K_M\cap E^n_k}|\tilde{R}^n_k(x)| \longrightarrow 0, \ \ 
n\rightarrow \infty, 
\end{eqnarray}
we can see that 
\begin{eqnarray}\label{ineq_hat_e}
\hat{e}^n_k(q - \tilde{R}^n_k(x^n_k)) > -e^{-3/2}/2 > -e^{-2}
\end{eqnarray}
for large enough $n$ and $k = 0, \ldots, n - 1$. 
Combining the inequality 
\begin{eqnarray*}
-\frac{2e^{3/2}}{\alpha } \leq I(q) < 0 < \frac{d}{dq}I(q) \leq \frac{12e^3}{\alpha}, \ \ 
x \geq -e^{-3/2}/2
\end{eqnarray*}
with (\ref {temp_calc}) and (\ref {ineq_hat_e}), we obtain 
\begin{eqnarray*}
\left| n\varepsilon ^n_k(q) - J^n_k(q) \right| 
\leq 
\frac{2e^{3/2}}{\alpha }
|n\tilde{R}^n_k(x^n_k) - G^n_k(x^n_k)| + 
\frac{12e^3}{\alpha }|\tilde{R}^n_k(x^n_k)|\cdot |G^n_k(x^n_k)|. 
\end{eqnarray*}
Now, we complete the proof of the assertion (ii) by combining 
the above inequality with 
Lemma \ref {conv_unif_F}, (\ref {est_Gnk}), and (\ref {conv_Rnk}). 
\end{proof}

Note that 
Lemma \ref {lem_conv_F_inv} guarantees that the following expansion is valid and that convergence occurs: 
\begin{eqnarray*}
F^{n, -1}_{[nr]}(q) &=& P^{-1}(\exp (-2c^{2[nr]}_ny)q) - 2c^{2[nr]}_ny / \alpha + 
\frac{1}{n}J^n_{[nr]}(q) + o(n^{-1})\\
&\longrightarrow &
P^{-1}\left( \exp (-2e^{-2\beta r}y)q\right) - 2e^{-2\beta r}y / \alpha , \ \ n\rightarrow \infty. 
\end{eqnarray*}

Here, we study a discretization of the function $H(\lambda)$. 
Let 
\begin{eqnarray*}
H_n(\lambda ) = 
\alpha \exp \left(\alpha (1 - c_n)\sum ^{n - 2}_{k = 0}F^{n, -1}_k(\exp (c^{2k}_ny)\lambda / \alpha) - 
\alpha \varphi + z - c^{2(n-1)}_ny\right) - \lambda. 
\end{eqnarray*}
Lemma \ref {lem_conv_F_inv} immediately implies the following proposition. 

\begin{proposition} \ \label{prop_conv_unif_H}
$H_n$ converges uniformly to $H$ on any compact subset of $\mathbb{R}$. 
\end{proposition}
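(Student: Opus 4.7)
My plan is to reduce the proposition to the uniform convergence of the exponents, then use Lemma \ref{lem_conv_F_inv} to replace the discrete $F^{n,-1}_k$ by the continuous $P^{-1}$, and finally recognize the remaining main term as a Riemann sum for the integral appearing in $H(\lambda)$. Since $\exp$ is uniformly continuous on any compact interval and the term $-\lambda$ is the same in both $H_n$ and $H$, it suffices to show that
$$G_n(\lambda) := \alpha(1-c_n)\sum_{k=0}^{n-2} F^{n,-1}_k\!\left(\exp(c_n^{2k}y)\lambda/\alpha\right) - c_n^{2(n-1)}y$$
converges uniformly on compact $\lambda$-sets to
$$G(\lambda) := \alpha\beta \int_0^1 P^{-1}\!\left(\exp(-e^{-2\beta r}y)\lambda/\alpha\right)dr - y,$$
since $H_n(\lambda) - H(\lambda) = \alpha\{\exp(G_n(\lambda)-\alpha\varphi+z)-\exp(G(\lambda)-\alpha\varphi+z)\}$.

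Next, for $\lambda$ in a fixed compact set, the argument $q^n_k(\lambda) = \exp(c_n^{2k}y)\lambda/\alpha$ lies in a bounded subset of $[0,\infty)$ uniformly in $n$ and $k$, because $c_n^{2k}\in[e^{-2\beta},1]$. By Lemma \ref{lem_conv_F_inv}(i), there exist errors $\varepsilon^n_k$ with $\max_k\sup_\lambda|\varepsilon^n_k(q^n_k(\lambda))|\to 0$ such that
$$F^{n,-1}_k(q^n_k(\lambda)) = P^{-1}\!\left(\exp(-c_n^{2k}y)\lambda/\alpha\right) - \frac{2c_n^{2k}y}{\alpha} + \varepsilon^n_k(q^n_k(\lambda)),$$
where I used the algebraic simplification $\exp(-2c_n^{2k}y)\cdot\exp(c_n^{2k}y) = \exp(-c_n^{2k}y)$. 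Since $\alpha(1-c_n)(n-1) \to \alpha\beta$ is bounded, the total contribution of the $\varepsilon^n_k$-terms to $G_n(\lambda)$ vanishes uniformly in $\lambda$ as $n\to\infty$.

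It then remains to analyze the Riemann-sum term $\alpha(1-c_n)\sum_{k=0}^{n-2}\{P^{-1}(\exp(-c_n^{2k}y)\lambda/\alpha) - 2c_n^{2k}y/\alpha\}$. Writing $(1-c_n) = \beta/n + O(1/n^2)$ and $c_n^{2k} = e^{-2\beta k/n}$, the standard convergence of Riemann sums yields, for the continuous integrand $f_\lambda(r) = P^{-1}(\exp(-e^{-2\beta r}y)\lambda/\alpha) - 2e^{-2\beta r}y/\alpha$, the limit $\alpha\beta\int_0^1 f_\lambda(r)dr$. Combining this with the boundary term $-c_n^{2(n-1)}y \to -e^{-2\beta}y$ and using the elementary identity $2\beta\int_0^1 e^{-2\beta r}dr + e^{-2\beta} = 1$ reassembles exactly the expression $G(\lambda)$. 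The main obstacle is establishing the Riemann-sum convergence uniformly in $\lambda$: this requires that the map $(\lambda, r) \mapsto f_\lambda(r)$ is jointly continuous and that the sequence of modulus-of-continuity bounds in $r$ is uniform over compact $\lambda$-sets. Both facts follow from the smoothness of $P^{-1}$ on the relevant compact range of arguments (guaranteed because $\exp(-c_n^{2k}y)\lambda/\alpha$ stays in a compact subset of $(-e^{-2},\infty)$) and the chain rule, so a standard equicontinuity argument finishes the proof.
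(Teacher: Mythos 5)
Your proof is correct and takes essentially the same route the paper intends: the paper deduces this proposition in one line from Lemma \ref{lem_conv_F_inv}, and your argument just fills in that deduction (writing $F^{n,-1}_k(q) = P^{-1}(\exp(-2c_n^{2k}y)q) - 2c_n^{2k}y/\alpha + \varepsilon^n_k(q)$, noting the error sum is $O(\max_k|\varepsilon^n_k|)$ since $\alpha(1-c_n)(n-1)$ is bounded, and passing to the Riemann integral with the boundary term $-c_n^{2(n-1)}y$ absorbed via $2\beta\int_0^1 e^{-2\beta r}\,dr + e^{-2\beta} = 1$). The $y$-bookkeeping, which is the only delicate point in reconciling the exponent of $H_n$ with that of $H$, is exactly right.
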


By Proposition \ref {prop_conv_unif_H} and the fact that $H_n$ is strictly decreasing on $[0, \infty)$, 
we can choose an $n$ large enough that 
there is a unique solution $\hat{\lambda }^n$ of $H_n(\lambda ) = 0$ on $(0, 2\lambda ^*)$. 
Moreover, it follows that $\hat{\lambda}^n$ converges to $\lambda^*$ 
as $n\rightarrow \infty$. 

From now, we construct an optimizer to $f^n$ for large enough $n$. 
Set $\hat{\psi }^n_k = \mathcal {T}_k(\hat{\lambda }^n)$, $k = 0, \ldots, n-1$, where 
\begin{eqnarray*}\label{eq_relation_psi_lambda}
\left. 
\begin{array}{ccl}
&&\mathcal {T}_0(\lambda ) = F^{n, -1}_0(\exp (y)\lambda / \alpha) + z / \alpha , \\
&&\mathcal {T}_k(\lambda ) = F^{n, -1}_k(\exp (c^{2k}_ny)\lambda / \alpha) - 
c_nF^{n,-1}_{k - 1}(\exp (c^{2(k - 1)}_ny)\lambda / \alpha), \ k = 1, \ldots , n-2, \\
&&\mathcal {T}_{n-1}(\lambda) = 
\varphi - (1 - c_n)\sum ^{n - 3}_{k = 0} F^{n,-1}_k(\exp (c^{2k}_ny)\lambda / \alpha)\\&&\hspace{16mm} - 
F^{n,-1}_{n-2}(\exp (c^{2(n - 2)}_ny)\lambda / \alpha ) - z/\alpha. 
\end{array}
\right. 
\end{eqnarray*}

The following lemma is obtained from Lemma \ref {lem_conv_F_inv}, the convergences 
$n(1-c_n)\longrightarrow \beta$ and $\hat{\lambda }^n\longrightarrow \lambda^*$ as 
$n\rightarrow \infty$, and the finiteness of 
\begin{eqnarray*}
\sup_{n, k}\sup _{q\in K\cap E^n_k}|J^n_k(q)| < \infty 
\end{eqnarray*}
for each compact subset $K$ of $\mathbb{R}$. 

\begin{lemma} \ \label{conv_str}It holds that 
\begin{eqnarray*}
|\hat{\psi }^n_0 - p^*| + \max _{k = 1, \ldots , n - 2}|n\hat{\psi }^n_k - \zeta ^*_{k/n}| + 
|\hat{\psi }^n_{n-1} - q^*| \ \longrightarrow \ 0, \ \ n\rightarrow \infty . 
\end{eqnarray*}
\end{lemma}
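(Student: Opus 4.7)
The plan is to substitute the asymptotic expansion from Lemma \ref{lem_conv_F_inv} into the formula $\hat\psi^n_k = \mathcal{T}_k(\hat\lambda^n)$ and verify, term by term, that the limits match $p^*$, $\zeta^*_r$, and $q^*$. Set $\tilde\xi^n_k := P^{-1}(e^{-c_n^{2k}y}\hat\lambda^n/\alpha)$ and $q^n_k := e^{c_n^{2k}y}\hat\lambda^n/\alpha$. Lemma \ref{lem_conv_F_inv}(i) yields the uniform expansion $F^{n,-1}_k(q^n_k) = \tilde\xi^n_k - 2c_n^{2k}y/\alpha + \varepsilon^n_k$ with $\max_k|\varepsilon^n_k| \to 0$. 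Since $\hat\lambda^n \to \lambda^*$ by Proposition \ref{prop_conv_unif_H}, since $c_n^{2k} = e^{-2\beta k/n}$, and since $P^{-1}$ is smooth on the relevant compact range, we also have $\max_k|\tilde\xi^n_k - \xi^*_{k/n}| \to 0$. The endpoint $k=0$ is then immediate: $\hat\psi^n_0 = \tilde\xi^n_0 + \varepsilon^n_0 - 2y/\alpha + z/\alpha \to \xi^*_0 + (z-2y)/\alpha = p^*$.

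For $k = 1, \ldots, n-2$, substituting the expansion into $\mathcal{T}_k$ and rearranging gives
\begin{equation*}
\hat\psi^n_k = (\tilde\xi^n_k - \tilde\xi^n_{k-1}) + (1-c_n)\tilde\xi^n_{k-1} + \frac{2y}{\alpha}c_n^{2k-1}(1-c_n) + \varepsilon^n_k - c_n\varepsilon^n_{k-1}.
\end{equation*}
Multiply by $n$ and treat each summand separately using $n(1-c_n) \to \beta$: the middle two terms converge uniformly in $k$ to $\beta\xi^*_{k/n}$ and $(2\beta y/\alpha)e^{-2\beta k/n}$, respectively. The finite difference $n(\tilde\xi^n_k - \tilde\xi^n_{k-1})$ is approximated, via a single application of the mean value theorem to the smooth map $r \mapsto P^{-1}(e^{-e^{-2\beta r}y}\hat\lambda^n/\alpha)$, by its $r$-derivative at $r = k/n$; this derivative in turn converges uniformly in $k$ to $(d/dr)\xi^*_r$ at $r = k/n$ by continuity of $(P^{-1})' = I$ on a compact set together with $\hat\lambda^n \to \lambda^*$. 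The residual $n(\varepsilon^n_k - c_n\varepsilon^n_{k-1})$ vanishes uniformly: Lemma \ref{lem_conv_F_inv}(ii) replaces $n\varepsilon^n_k$ by $J^n_k(q^n_k)$ up to $o(1)$, and $J^n_k(q^n_k) - J^n_{k-1}(q^n_{k-1}) = O(1/n)$ because $G^n_k$, $c_n^{2k}$, and $F^{n,-1}_k$ all vary smoothly with the index. Differentiating $\xi^*_r = P^{-1}(e^{-e^{-2\beta r}y}\lambda^*/\alpha)$ and using $I = (P^{-1})'$ yields the identity $\zeta^*_r = (d/dr)\xi^*_r + \beta\xi^*_r + (2\beta y/\alpha)e^{-2\beta r}$, which matches our sum of limits, so $\max_{1\leq k\leq n-2}|n\hat\psi^n_k - \zeta^*_{k/n}| \to 0$.

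For the terminal weight, apply the expansion inside $\mathcal{T}_{n-1}(\hat\lambda^n)$. The partial sum $(1-c_n)\sum_{k=0}^{n-3}\tilde\xi^n_k = n(1-c_n)\cdot (1/n)\sum_{k=0}^{n-3}\tilde\xi^n_k$ is a Riemann sum converging to $\beta\int_0^1\xi^*_r dr$; the geometric series $(1-c_n)\sum_{k=0}^{n-3}c_n^{2k} = (1-c_n^{2(n-2)})/(1+c_n)$ converges to $(1-e^{-2\beta})/2$; the $\varepsilon^n_k$ contributions sum to $O(1/n)$ since $\max_k|\varepsilon^n_k| = O(1/n)$ by Lemma \ref{lem_conv_F_inv}(ii); and $F^{n,-1}_{n-2}(q^n_{n-2}) \to \xi^*_1 - 2e^{-2\beta}y/\alpha$. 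Assembling these pieces gives $\hat\psi^n_{n-1} \to \varphi - \beta\int_0^1\xi^*_r dr - \xi^*_1 - z/\alpha + y(1+e^{-2\beta})/\alpha = q^*$.

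The main obstacle is the uniformity in $k$ of the two most delicate limits, namely $n(\tilde\xi^n_k - \tilde\xi^n_{k-1}) \to (d/dr)\xi^*_r$ at $r=k/n$, and $n(\varepsilon^n_k - c_n\varepsilon^n_{k-1}) \to 0$. Both rest on combining smoothness of $P^{-1}$ (and of the ingredients of $J^n_k$) on a compact image with the uniform estimates of Lemma \ref{lem_conv_F_inv}; once these are in hand, the remainder is algebraic bookkeeping against the explicit formulas defining $p^*$, $\zeta^*_r$, and $q^*$.
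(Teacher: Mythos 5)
Your proposal is correct and follows essentially the same route as the paper: both substitute the uniform expansion of $F^{n,-1}_k$ from Lemma \ref{lem_conv_F_inv} into $\hat{\psi}^n_k = \mathcal{T}_k(\hat{\lambda}^n)$, use $n(1-c_n)\to\beta$ and $\hat{\lambda}^n\to\lambda^*$, and dispose of the residual $n(\varepsilon^n_k - c_n\varepsilon^n_{k-1})$ via the limit $J$ and its uniform continuity on a compact set. The paper merely compresses your term-by-term bookkeeping (including the identity $\zeta^*_r = \tfrac{d}{dr}\xi^*_r + \beta\xi^*_r + \tfrac{2\beta y}{\alpha}e^{-2\beta r}$, which you verify correctly) into a single displayed inequality with an unspecified constant $C$.
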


Lemma \ref {conv_str} and the relations $p^*$, $\zeta^*_r$, and $q^* > 0$ together imply the following lemma. 

\begin{lemma} \ \label{psi_positive} It holds that 
$\hat{\psi }^n_k > 0$, \ $k = 0, \ldots , n - 1$ for sufficiently large values of $n$; 
thus, $(\hat{\psi }^n_k)_k\in \mathcal {A}^{n, \mathrm {SO}}_n(\varphi )$. 
\end{lemma}

Now, we define an $(n+1)$-variable function $\mathcal {L}_n(x_0, \ldots, x_{n-1}, \lambda)$ by 
\begin{eqnarray*}
\mathcal {L}_n(x_0, \ldots , x_{n-1}, \lambda) = 
\tilde{f}^n(x_0, \ldots , x_{n-1}) + \lambda(\varphi - x_0 - \cdots - x_{n-1}). 
\end{eqnarray*}
Then, we have the following. 

\begin{lemma} \ \label{lemma_Lagrange}
When $n$ is large enough, 
a solution of 
\begin{eqnarray}\label{eg_temp_eqn1}
\frac{\partial }{\partial x_0}\mathcal {L}_n = \cdots = 
\frac{\partial }{\partial x_{n-1}}\mathcal {L}_n = 
\frac{\partial }{\partial \lambda }\mathcal {L}_n = 0. 
\end{eqnarray}
coincides with $(\hat{\psi }^n_0, \ldots , \hat{\psi }^n_{n-1}, \hat{\lambda }^n)$. 
\end{lemma}

\begin{proof} 
Suppose that a vector $(\tilde{x}_0, \ldots, \tilde{x}_{n-1}, \tilde{\lambda })$ 
is a solution of (\ref {eg_temp_eqn1}). 
Then, we have that $\tilde{x}_0 + \cdots + \tilde{x}_{n -1} = \varphi$, and Lemma \ref {lem_eg_1} implies 
\begin{eqnarray*}
\tilde{\lambda } = c_n\tilde{\lambda } + 
\alpha (1-c_n)\exp (-c^{2k}_ny)F^n_k\left (\sum ^k_{l = 0}c^{k - l}_n\tilde{x}_l - c^k_nz/\alpha \right); 
\end{eqnarray*}
hence, 
\begin{eqnarray}\label {temp_lemma_eq_3}
\sum ^k_{l = 0}c^{k - l}_n\tilde{x}_l = 
F^{n, -1}_k(\exp (c^{2k}_ny)\tilde{\lambda}/\alpha) + c^k_nz/\alpha, \ \ k = 0, \ldots, n - 2. 
\end{eqnarray}
Then, we see that $\tilde{x}_k = \mathcal {T}_k(\tilde{\lambda})$, $k = 0, \ldots, n - 1$. 
Therefore, 
\begin{eqnarray*}\label{eg_temp_eqn2}
0 &=& \frac{\partial}{\partial x_{n-1}}\mathcal {L}_n(\tilde{x}_0, \ldots, \tilde{x}_n, \tilde{\lambda})\\
&=& 
\alpha \exp \left (c^{n-1}_nz - c^{2(n-1)}_ny - \alpha \sum^{n-1}_{l = 0}c^{n-1-l}_n\tilde{x}_l\right) - 
\tilde{\lambda}= H_n(\tilde{\lambda}).  
\end{eqnarray*}
Since $\hat{\lambda }^n$ is the unique solution of $H_n(\lambda) = 0$, 
we have that $\tilde{\lambda } = \hat{\lambda }^n$. 
This equality also implies 
$\tilde{x}_k = \mathcal {T}_k(\hat{\lambda }^n) = \hat{\psi}^n_k$, $k = 0, \ldots , n - 1$, 
and so the solution of (\ref {eg_temp_eqn1}) is also unique. 
\end{proof}

Now, we arrive at the following proposition. 

\begin{proposition} \ \label{prop_opt_hat_psi}It holds that 
$f^n = \tilde{f}^n(\hat{\psi }^n_0, \ldots , \hat{\psi }^n_{n-1}) / \alpha$ for sufficiently large values of $n$. 
\end{proposition}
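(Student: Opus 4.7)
The plan is to identify $(\hat{\psi}^n_0,\ldots,\hat{\psi}^n_{n-1})$ as the global maximizer of $\tilde{f}^n$ on the whole constraint set $\Xi^n(\varphi)$, and then to observe that for large $n$ this point actually lies in $\mathcal{A}^{n,\mathrm{det},\mathrm{SO}}_n(\varphi)$, so the two suprema coincide.

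First, I would use Lemma \ref{lemma_divergence}: since $\tilde{f}^n$ is continuous on the closed affine hyperplane $\Xi^n(\varphi)\subset\mathbb{R}^n$ and tends to $-\infty$ as $|x|\to\infty$ along $\Xi^n(\varphi)$, the supremum
\begin{equation*}
M_n := \sup_{x\in\Xi^n(\varphi)} \tilde{f}^n(x)
\end{equation*}
is attained at some $x^\star\in\Xi^n(\varphi)$. Because $\Xi^n(\varphi)$ has no boundary in $\mathbb{R}^n$ (it is an affine hyperplane) and $\tilde{f}^n$ is smooth, the maximizer $x^\star$ is an interior critical point of the constrained problem, so by the Lagrange multiplier theorem there exists $\lambda^\star\in\mathbb{R}$ such that $(x^\star_0,\ldots,x^\star_{n-1},\lambda^\star)$ satisfies the system (\ref{eg_temp_eqn1}).

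Next, I would invoke Lemma \ref{lemma_Lagrange}: for $n$ large enough, that system has $(\hat{\psi}^n_0,\ldots,\hat{\psi}^n_{n-1},\hat{\lambda}^n)$ as its unique solution. Therefore $x^\star=(\hat{\psi}^n_0,\ldots,\hat{\psi}^n_{n-1})$ and $M_n=\tilde{f}^n(\hat{\psi}^n_0,\ldots,\hat{\psi}^n_{n-1})$. Finally, Lemma \ref{psi_positive} gives $\hat{\psi}^n_k>0$ for all $k$ when $n$ is large, so $(\hat{\psi}^n_k)_{k=0}^{n-1}\in\mathcal{A}^{n,\mathrm{det},\mathrm{SO}}_n(\varphi)$. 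Combining this with the inclusion $\mathcal{A}^{n,\mathrm{det},\mathrm{SO}}_n(\varphi)\subset\Xi^n(\varphi)$ yields
\begin{equation*}
\alpha f^n(n) = \sup_{(x_k)\in\mathcal{A}^{n,\mathrm{det},\mathrm{SO}}_n(\varphi)} \tilde{f}^n(x) \le M_n = \tilde{f}^n(\hat{\psi}^n_0,\ldots,\hat{\psi}^n_{n-1}) \le \alpha f^n(n),
\end{equation*}
where the last inequality follows because the candidate vector is admissible. This gives the claimed identity.

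The only delicate point in this plan is justifying that the maximizer must be an \emph{interior} critical point rather than escaping to infinity; this is handled cleanly by Lemma \ref{lemma_divergence}, which is precisely why that coercivity lemma was established beforehand. Every other step is a direct appeal to the lemmas already proved: the Lagrange characterization (Lemma \ref{lemma_Lagrange}) pins down the critical point uniquely, and the positivity (Lemma \ref{psi_positive}) ensures admissibility under the sign constraint of $\mathcal{A}^{n,\mathrm{det},\mathrm{SO}}_n(\varphi)$.
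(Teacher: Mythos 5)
Your proposal is correct and follows essentially the same route as the paper: coercivity from Lemma \ref{lemma_divergence} yields an attained maximizer on $\Xi^n(\varphi)$, the Lagrange multiplier method together with Lemma \ref{lemma_Lagrange} identifies it uniquely as $(\hat{\psi}^n_0,\ldots,\hat{\psi}^n_{n-1})$, and positivity from Lemma \ref{psi_positive} makes it admissible so the two suprema agree. Your write-up is in fact slightly more explicit than the paper's about the final sandwich between the supremum over $\mathcal{A}^{n,\mathrm{det},\mathrm{SO}}_n(\varphi)$ and the supremum over $\Xi^n(\varphi)$, which the paper leaves implicit.
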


\begin{proof} 
By Lemma \ref {lemma_divergence}, 
we can find an $M > 0$ large enough so that $\tilde{f}^n(x) < 0$ holds for $x\in \Xi ^n(\varphi )$ when $|x| \geq M$. 
Then, $\tilde{f}^n$ has at least one local maximum on $(-M, M)^n$, which we denote by $\tilde{x} = (\tilde{x}_0, \ldots , \tilde{x}_{n - 1})$. 
By the Lagrange multiplier method, 
we see that there is a $\tilde{\lambda}\in \mathbb {R}$ such that 
(\ref {eg_temp_eqn1}) holds at $(\tilde{x}, \tilde{\lambda})$. 
Then, Lemma \ref {lemma_Lagrange} implies 
$\tilde {x}_k = \hat{\psi}^n_k$ for $k = 0, \ldots, n - 1$. 
This means that $(\hat{\psi}_1, \ldots, \hat{\psi}_{n-1})$ is the unique local maximum, 
which also becomes the global maximum of $\tilde {f}^n$ on $\Xi ^n(\varphi)$. 
\end{proof}

Now, we prove Theorem \ref {eg_OU}. 
We write $\tilde{f}^n(\hat{\psi }^n_0, \ldots, \hat{\psi }^n_{n-1})$ as the following three parts: 
\begin{eqnarray*}
\tilde{f}^n(\hat{\psi }^n_0, \ldots , \hat{\psi }^n_{n-1}) &=& 
e^{z-y}(1-e^{-\alpha \hat{\psi }^n_0})\\&& + 
\sum^{n-2}_{k=1}
\exp \left(c^k_nz - c^{2k}_ny - \alpha \sum ^{k-1}_{l = 0}c^{k-l}_n\hat{\psi }^n_l\right)
(1-e^{-\alpha \hat{\psi}^n_k})\\&& + 
\exp \left (c^{n-1}_nz - c^{2(n-1)}_ny - \alpha \sum^{n-2}_{k = 0}c^{n-1-k}_n\hat{\psi }^n_k\right)
(1-e^{-\alpha \hat{\psi}^n_{n-1}})\\
&=:& 
\tilde{A}_n + \tilde{B}_n + \tilde{C}_n, 
\end{eqnarray*}
where $\tilde{A}_n$, $\tilde{B}_n$, and $\tilde{C}_n$ are defined in the obvious way. By Lemma \ref {conv_str}, we easily have 
\begin{eqnarray}\label{conv_final1}
\tilde{A}_n \longrightarrow e^{z-y}(1-e^{-\alpha p^*}), \ \ 
n\rightarrow \infty . 
\end{eqnarray} 
Using the relation (\ref {temp_lemma_eq_3}) and Lemmas \ref {lem_conv_F_inv}--\ref {conv_str}, we have 
\begin{eqnarray}\nonumber \label{conv_final2}
\tilde{C}_n &=& 
\exp \left ((c^{n-1}_n - c^{n-2}_n)z - c^{2(n-1)}_ny - 
\alpha F^{n, -1}_{n-2}(\exp (c^{2(n-1)}_ny)\hat{\lambda }^n/\alpha)\right )\\\nonumber 
&&\times (1-e^{-\alpha \hat{\psi}^n_{n-1}})\\\nonumber 
&\longrightarrow & 
\exp\left(e^{-2\beta}y - 
\alpha P^{-1}(\exp(e^{-2\beta }y)\lambda^*/\alpha)\right)
(1-e^{-\alpha q^*})\\
&=& 
e^{z-y}e^{-\alpha \eta^*_1}(1-e^{-\alpha q^*})
\end{eqnarray} 
as $n\rightarrow \infty$. 
To calculate the limit of $\tilde{B}_n$, we set 
\begin{eqnarray*}
\hat{B}_n = 
\frac{\alpha }{n}\sum^{n-2}_{k=1}
\exp \left (c^{2k}_ny - \alpha \xi ^*_{k/n}\right )\zeta ^*_{k/n}. 
\end{eqnarray*}
Then, we have 
\begin{eqnarray}\nonumber \label{conv_final3}
&&|\tilde{B}_n - \hat{B}_n| \leq 
e^z\left \{ \sum ^{n-2}_{k=1}
\left |e^{-\alpha \hat{\psi }^n_k} - e^{-\alpha \zeta ^*_{k/n}/n}\right | + 
\sum ^{n-2}_{k=1}
\left |1 - e^{-\alpha \zeta ^*_{k/n}/n} - \frac{\alpha \zeta ^*_{k/n}}{n}\right |\right \}\\\nonumber &&\hspace{5mm} + 
\frac{\alpha }{n}\sum ^{n-2}_{k=1}
\left| \exp(-c^{2k}_ny-\alpha F^{n, -1}_k(\exp (c^{2k}_ny)\hat{\lambda }^n/\alpha) - 
\exp (c^{2k}_ny-\alpha \xi ^*_{k/n})\right| \zeta^*_{k/n} \\
&&\ \ \ \ \ \longrightarrow  0, \ \ n\rightarrow \infty 
\end{eqnarray}
by virtue of (\ref {temp_lemma_eq_3}) and Lemmas \ref {lem_conv_F_inv}--\ref {conv_str}. 
Moreover, we have 
\begin{eqnarray}\nonumber \label{conv_final4}
\lim _{n\rightarrow \infty }\hat{B}_n = 
\alpha \int^1_0\exp (e^{-2\beta r}y - \alpha \xi^*_r)\zeta ^*_rdr = 
\alpha e^{z-y}\int^1_0e^{-\alpha \eta^*_r}\zeta^*_rdr. 
\end{eqnarray}
By (\ref {conv_final1})--(\ref {conv_final4}), 
we see that 
$w + e^{F+y}(\tilde{A}_n + \tilde{B}_n + \tilde{C}_n) / \alpha$ converges to 
the right-hand side of (\ref {rw_value_fnc}). 
Then, we obtain the desired assertion by Propositions \ref {prop_conv_fn} and \ref {prop_opt_hat_psi}. \qed

\end{document}